\newtheorem{corollary}{Corollary}
\newtheorem{lemma}{Lemma}
\newtheorem{prop}{Proposition}
\newtheorem{theorem}{Theorem} 
\newtheorem{definition}{Definition}
\def\EE{{\mathbb{E}}}
\def\PP{{\mathbb{P}}}
\def\setN{\mathbb{N}}
\def\N{\mathcal{N}}
\newcommand{\fall}{\,\forall\,}
\newcommand{\PR}{\pi}
\newcommand{\PRi}{\PR^{-1}}
\newcommand{\e}{\epsilon}
\newcommand{\samp}{k}
\newcommand{\cG}{\mathcal{G}}
\newcommand{\be}{\mathbf{e}}
\newcommand{\bone}{\mathbf{1}}
\newcommand{\signif}{\emph{Detect-High}}
\newcommand{\flo}[1]{\lfloor #1 \rfloor}
\DeclareMathOperator*{\argmax}{arg\,max}
\newfont{\mycrnotice}{ptmr8t at 7pt}
\newfont{\myconfname}{ptmri8t at 7pt}
\begin{document}

\title{FAST-PPR: Scaling Personalized PageRank Estimation for Large Graphs}

\numberofauthors{4} 

\author{
\alignauthor 
Peter Lofgren\\
        \affaddr{Department of Computer Science}\\
        \affaddr{Stanford University}\\
        \email{plofgren@cs.stanford.edu}
\alignauthor
Siddhartha Banerjee\\
        \affaddr{Department of Management Science \& Engineering}\\
        \affaddr{Stanford University}\\
        \email{sidb@stanford.edu}
\alignauthor
Ashish Goel\\
        \affaddr{Department of Management Science \& Engineering}\\
        \affaddr{Stanford University}\\
        \email{ashishg@stanford.edu}
\and  
\alignauthor 
C. Seshadhri\\
       \affaddr{Sandia National Labs}\\
       \affaddr{Livermore, CA}\\
       \email{scomand@sandia.gov}
}

\maketitle

\begin{abstract}
We propose a new algorithm, FAST-PPR, for estimating personalized
PageRank: given start node $s$ and target node $t$ in a directed graph,
and given a threshold $\delta$, FAST-PPR estimates the Personalized
PageRank $\PR_s(t)$ from $s$ to $t$, guaranteeing a small relative error as long $\PR_s(t)>\delta$. Existing algorithms for this
problem have a running-time of $\Omega(1/\delta)$; in comparison,
FAST-PPR has a provable average running-time guarantee of
${O}(\sqrt{d/\delta})$ (where $d$ is the average in-degree of the graph).
This is a significant improvement, since $\delta$ is often $O(1/n)$
(where $n$ is the number of nodes) for applications. We also complement
the algorithm with an $\Omega(1/\sqrt{\delta})$ lower bound for PageRank
estimation, showing that the dependence on $\delta$ cannot be improved.

We perform a detailed empirical study on numerous massive graphs, showing that FAST-PPR dramatically outperforms existing algorithms. For example, on the 2010 Twitter graph with 1.5 billion edges, for target nodes sampled by popularity, FAST-PPR has a $20$ factor speedup over the state of the art. Furthermore, an enhanced version of FAST-PPR has a $160$ factor speedup on the Twitter graph, and is at least $20$ times faster on all our candidate graphs.
\end{abstract}

\category{F.2.2}{Nonnumerical Algorithms and Problems}{Computations on Discrete Structures}

\category{G.2.2}{Graph Theory}{Graph Algorithms}
\terms{Algorithms,Theory}
\keywords{Personalized PageRank; Social Search}

\section{Introduction}

The success of modern networks is largely due to the ability to search effectively on them. A key primitive is PageRank \cite{Page1999}, which is widely used as a measure of network importance. The popularity of PageRank is in large part due to its fast computation in large networks. As modern social network applications shift towards being more customized to individuals, there is a need for similar ego-centric measures of network structure. 

\emph{Personalized PageRank} (PPR) \cite{Page1999} has long been viewed as the appropriate ego-centric equivalent of PageRank. For a node $u$, the personalized PageRank vector $\PR_u$ measures the frequency of visiting other nodes via short random-walks from $u$. This makes it an ideal metric for \emph{social search}, giving higher weight to content generated by nearby users in the social graph. Social search protocols find widespread use -- from personalization of general web searches \cite{Page1999,Jeh2003,yin2010}, to more specific applications like collaborative tagging networks \cite{yahia2008}, ranking name search results on social networks \cite{vieira2007}, social Q\&A sites \cite{horowitz2010}, etc. In a typical personalized search application, given a set of candidate results for a query, we want to estimate the Personalized PageRank to each candidate result. This motivates the following problem:
\begin{center}
\emph{Given source node $s$ and target node $t$, estimate the Personalized PageRank $\PR_s(t)$ up to a small relative error.}
\end{center}
Since smaller values of $\PR_s(t)$ are more difficult to detect, we parameterize the problem by threshold $\delta$, requiring small relative errors only if $\PR_s(t) > \delta$. Current techniques used for PPR estimation (see Section \ref{ssec:relwork}) have $\Omega(1/\delta)$ running-time -- this makes them infeasible for large networks when the desired $\delta=O(1/n)$ or $O((\log n)/n)$. 

In addition to social-search, PPR is also used for a variety of other tasks across different domains: friend recommendation on Facebook \cite{backstrom2011supervised}, who to follow on Twitter \cite{gupta2013wtf}, graph partitioning \cite{Andersen2006},  community detection \cite{yang2012defining}, and other applications \cite{tong2006fast}.  Other measures of personalization, such as personalized SALSA and SimRank \cite{sarlos2006randomize}, can be reduced to PPR. However, in spite of a rich body of existing work~\cite{Jeh2003,Avrachenkov2007,Andersen2006,Andersen2007,Bahmani2010,Borgs2013,Sarma2013}, estimating PPR is often a bottleneck in large networks.

\begin{figure*}[t]
\centering
\subfigure[Running time (in log scale) of different algorithms]{

\includegraphics[width=\columnwidth]{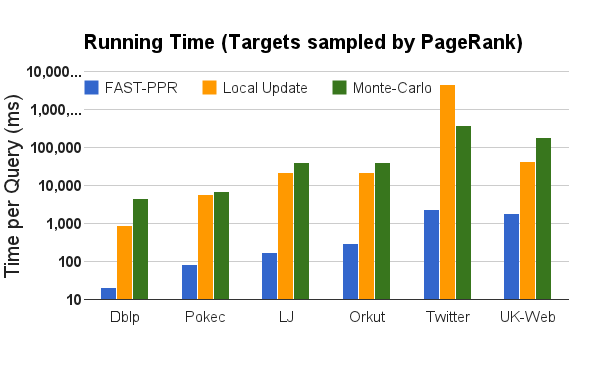}
}
\hfill
\subfigure[Relative accuracy of different algorithms]{
\label{fig:preview_b}
\includegraphics[width=\columnwidth]{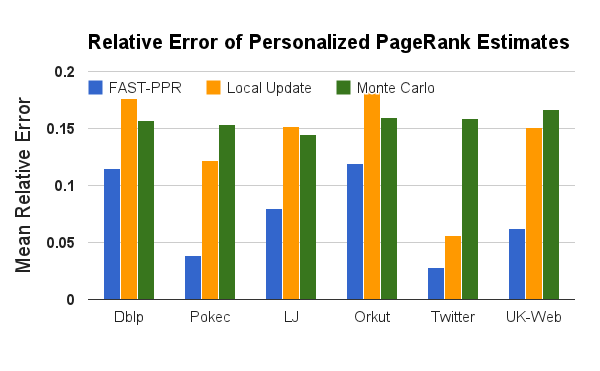}
}

\caption[Comparing different algorithms for PPR]{Comparison of Balanced FAST-PPR, Monte-Carlo and Local-Update algorithms in different networks -- $1000$ source-target pairs, threshold $\delta=\frac{4}{n}$, teleport probability $\alpha=0.2$. Notice that Balanced FAST-PPR is 20 times faster in all graphs, without sacrificing accuracy. For details, see Section \ref{sec:expts}.}
\label{fig:preview}
\end{figure*}

\subsection{Our Contributions}
\label{ssec:contrib}

We develop a new algorithm, \emph{Frontier-Aided Significance Thresholding for Personalized PageRank} (FAST-PPR), based on a new bi-directional search technique for PPR estimation:

\noindent$\bullet$ \textbf{Practical Contributions:} We present a simple implementation of FAST-PPR  which requires no pre-processing and has an average running-time of ${O}(\sqrt{d/\delta})$\footnote{We assume here that the desired relative error and  teleport probability $\alpha$ are constants -- complete scaling details are provided in our Theorem statements in Appendix \ref{appsec:imperfect_alt}.}. We also propose a simple heuristic, Balanced FAST-PPR, that achieves a significant speedup in practice. 

In experiments, FAST-PPR outperforms existing algorithms across a variety of real-life networks. For example, in Figure \ref{fig:preview}, we compare the running-times and accuracies of FAST-PPR with existing methods. Over a variety of data sets, FAST-PPR is significantly faster than the state of the art, with the same or better accuracy.

To give a concrete example: in experiments on the Twitter-2010 graph \cite{WebAlgorithmics}, Balanced FAST-PPR takes less than 3 seconds for random source-target pairs. In contrast, Monte Carlo takes more than 6 minutes and Local Update takes more than an hour. More generally in all graphs, FAST-PPR is at least 20 times faster than the state-of-the-art, without sacrificing accuracy.

\noindent$\bullet$ \textbf{Theoretical Novelty:} FAST-PPR is the first algorithm for PPR estimation with ${O}(\sqrt{d/\delta})$ average running-time, where $d=m/n$ is the average in-degree\footnote{Formally, for $(s,t)$ with $\PR_s(t)>\delta$, FAST-PPR returns an estimate $\widehat{\PR}_s(t)$ with relative error $c$, incurring $O\left(\frac{1}{c^2}\sqrt{\frac{d}{\delta}}\sqrt{\frac{\log(1/p_{fail})\log(1/\delta)}{\alpha^2\log(1/(1-\alpha))}}\right)$ average running-time see Corollary \ref{corr:apprbalance} in Appendix \ref{appsec:imperfect_alt} for details.}. Further, we modify FAST-PPR to get ${O}(1/\sqrt{\delta})$ worst-case running-time, by pre-computing and storing some additional information, with a required storage of ${O}(m/\sqrt{\delta})$.

We also give a new running-time \emph{lower bound} of $\Omega(1/\sqrt{\delta})$ for PPR estimation, which essentially shows that the dependence of FAST-PPR running-time on $\delta$ cannot be improved.

Finally, we note that FAST-PPR has the same performance gains for computing PageRank with arbitrary \emph{preference vectors} \cite{Jeh2003}, where the source is picked from a distribution over nodes. Different preference vectors are used for various applications \cite{Jeh2003,Page1999}. However, for simplicity of presentation, we focus on Personalized PageRank in this work.

\section{Preliminaries}

\label{sec:background}

Given $G(V,E)$, a directed graph, with $|V|=n,|E|=m$, and adjacency matrix $A$. For any node $u\in V$, we denote $\mathcal{N}^{out}(u),d^{out}(u)$ as the out-neighborhood and out-degree respectively; similarly $\mathcal{N}^{in}(u),d^{in}(u)$ are the in-neighborhood and in-degree. We define $d=\frac{m}{n}$ to be the average in-degree (equivalently, average out-degree).

The personalized PageRank vector $\PR_u$ for a node $u\in V$ is the stationary distribution of the following random walk starting from $u$: at each step, return to $u$ with probability $\alpha$, and otherwise move to a random out-neighbor of the current node. Defining $D=\mathrm{diag}(d^{out}(u))$, and $W=D^{-1}A$, the \emph{personalized PageRank} (PPR) vector of $u$ is given by:
\begin{align}
\label{eq:PPR}
\PR_u^T=\alpha \mathbf{e}_u^T+(1-\alpha)\PR_u^T.W,
\end{align}
where $\mathbf{e}_u$ is the identity vector of $u$. Also, for a target node $t$, we define the \emph{inverse-PPR} of a node $w$ with respect to $t$ as $\PR^{-1}_t(w)=\PR_w(t)$. The inverse-PPR vector $\{\PR^{-1}_t(w)\}_{w\in V}$ of $t$ sums to $n\PR(t)$, where $\PR(t)$ is the global PageRank of $t$.

Note that the PPR for a uniform random pair of nodes is $1/n$ -- thus for practical applications, we need to consider $\delta$ of the form $O(1/n)$ or $O(\log n/n)$. We reinforce this choice of $\delta$ using empirical PPR data from the Twitter-2010 graph in Section \ref{ssec:pprdistr} -- in particular, we observe that only 1\

\subsection{Existing Approaches for PPR Estimation}
\label{ssec:relwork}

There are two main techniques used to compute PageRank/PPR vectors. One set of algorithms use power iteration. Since performing a direct power iteration may be infeasible in large networks, a more common approach is to use \emph{local-update} versions of the power method, similar to the Jacobi iteration. This technique was first proposed by Jeh and Widom \cite{Jeh2003}, and subsequently improved by other researchers \cite{Fogaras2005,Andersen2006}. The algorithms are primarily based on the following recurrence relation for $\PR_u$:
\begin{align}
\label{eq:dynamic}
\PR_u^T=\alpha e_u^T+\frac{(1-\alpha)}{|\mathcal{N}^{out}(u)|}.\sum_{v\in\mathcal{N}^{out}(u)}\PR_v^T
\end{align}
Another use of such local update algorithms is for estimating the inverse-PPR vector for a target node. Local-update algorithms for inverse-PageRank are given in \cite{Andersen2007} (where inverse-PPR is referred to as the `contribution PageRank vector'), and \cite{Lofgren2013} (where it is called `susceptibility'). However, one can exhibit graphs where these algorithms need a running-time of $O(1/\delta)$ to get additive guarantees on the order of $\delta$.

Eqn. \ref{eq:dynamic} can be derived from the following probabilistic
re-interpretations of PPR, which also lead to an alternate set of
randomized or \emph{Monte-Carlo} algorithms. Given any random variable
$L$ taking values in $\setN_0$, let
$RW(u,L)\triangleq\{u,V_1,V_2,\ldots,V_L\}$ be a random-walk of random
length $L \sim Geom(\alpha)$\footnote{i.e.,
$\PP[L=i]=\alpha(1-\alpha)^{i}\fall i\in\setN_0$}, starting from $u$.
Then we can write:
\begin{align}
\label{eq:pprmc}
\PR_u(v)&=\PP\left[V_L=v\right]
\end{align}
In other words, $\PR_u(v)$ is the  probability that $v$ is the last node in $RW(u,L)$. Another alternative characterization is:
\[
\PR_u(v)=\alpha\EE\left[\sum_{i=0}^L\mathds{1}_{\{V_i=v\}}\right],
\]
i.e., $\PR_u(v)$ is proportional to the number of times $RW(u,L)$ visits node $v$. Both characterizations can be used to estimate $\PR_u(\cdot)$ via Monte Carlo algorithms, by generating and storing random walks at each node~ \cite{Avrachenkov2007,Bahmani2010,Borgs2013,Sarma2013}. Such estimates are easy to update in dynamic settings \cite{Bahmani2010}. However, for estimating PPR values close to the desired threshold $\delta$, these algorithms need $\Omega(1/\delta)$ random-walk samples. 

\subsection{Intuition for our approach}
\label{sec:prelim}

The problem with the basic Monte Carlo procedure -- generating random walks from $s$ and estimating the distribution of terminal nodes -- is that to estimate a PPR which is $O(\delta)$, we need $\Omega(1/\delta)$ walks. To circumvent this, we introduce a new \emph{bi-directional estimator} for PPR: given a PPR-estimation query with parameters $(s,t,\delta)$, we first work backward from $t$ to find a suitably large set of `targets', and then do random walks from $s$ to test for hitting this set. 

Our algorithm can be best understood through an analogy with the shortest path problem. In the bidirectional shortest path algorithm, to find a path of length $l$ from node $s$ to node $t$, we find all nodes within distance $\frac{l}{2}$ of $t$, find all nodes within distance $\frac{l}{2}$ of $s$, and check if these sets intersect. Similarly, to test if $\PR_s(t) > \delta$, we find all $w$ with $\PR_w(t)> \sqrt{\delta}$ (we call this the \emph{target set}), take $O(1/\sqrt{\delta})$ walks from the start node, and see if these two sets intersect. It turns out that these sets might not intersect even if $\PR_s(t)>\delta$, so we go one step further and consider the \emph{frontier set} -- nodes outside the target set which have an edge into the target set. We can prove that if $\PR_s(t)>\delta$ then random walks are likely to hit the frontier set.

Our method is most easily understood using the characterization of $\PR_s(t)$ as the probability that a single walk from $s$ ends at $t$ (Eqn. \eqref{eq:pprmc}). Consider a random walk from $s$ to $t$ -- at some point, it must enter the frontier. We can then decompose the probability of the walk reaching $t$ into the product of two probabilities: the probability that it reaches some node $w$ in the frontier, and the probability that it reaches $t$ starting from $w$. The two probabilities in this estimate are typically much larger than the overall probability of a random walk reaching $t$ from $s$, so they can be estimated more efficiently. Figure \ref{fig:fastppr_diag} illustrates this bi-directional scheme.

\subsection{Additional Definitions}

To formalize our algorithm, we first need some additional definitions. We define a set $B$ to be a blanket set for $t$ with respect to $s$ if all paths from $s$ to $t$ pass through $B$. Given blanket set $B$ and a random walk $RW(s,L)$, let $H_{B}$ be the first node in $B$ hit by the walk (defining $H_{B}=\bot$ if the walk does not hit $B$ before terminating). Since each walk corresponds to a unique $H_{B}$, we can write $\PP\left[V_L=v\right]$ as a sum of contributions from each node in $B$. Further, from the memoryless property of the geometric random variable, the probability a walk ends at $t$ conditioned on reaching $w\in B$ before stopping is exactly $\PR_s(t)$. Combining these, we have:
\begin{align}
\label{eq:PPRblanket}
\PR_s(t)&=\sum_{w\in B}\PP\left[H_{B}=w \right] \cdot \PR_w(t).
\end{align}
In other words, the PPR from $s$ to $t$ is the sum, over all nodes $w$ in blanket set $B$, of the probability of a random walk hitting $B$ first at node $w$, times the PPR from $w$ to $t$.

Recall we define the \emph{inverse-PPR vector} of $t$ as $\PR^{-1}_t=(\PR_w(t))_{w\in V}$. Now we introduce two crucial definitions:
\begin{definition}[Target Set]
The target set $T_t(\e_r)$ for a target node $t$ is given by:
\begin{equation*}
	 T_t(\e_r):= \{w \in V : \PR^{-1}_t(w)>\e_r\}.
\end{equation*} 
\end{definition}
\begin{definition}[Frontier Set]
The frontier set $F_t(\e_r)$ for a target node $t$ is defined as:
\begin{align*}
	 F_t(\e_r):= \Big(\bigcup_{v \in T_t(\e_r)} \mathcal{N}^{in}_v \Big) \setminus T_t(\e_r).
\end{align*}
\end{definition}
\noindent The target set $T_t(\e_r)$ thus contains all nodes with inverse-PPR greater than $\e_r$, while the frontier set $F_t(\e_r)$ contains all nodes which are in-neighbors of $T_t(\e_r)$, but not in $T_t(\e_r)$.

\subsection{A Bidirectional Estimator}

The next proposition illustrates the core of our approach.
\begin{prop}
\label{prop:frontier}
Set $\e_r < \alpha$.
Fix vertices $s,t$ such that $s\notin T_t(\e_r)$. 
\begin{enumerate}[nolistsep,noitemsep]
\item Frontier set $F_t(\e_r)$ is a blanket set of $t$ with respect to $s$.
\item For random-walk $RW(s,L)$ with length $L\sim Geom(\alpha)$:
	\begin{equation*}
		\PP[RW(s,L)\mbox{ hits }F_t(\e_r)]\geq\frac{\PR_s(t)}{\e_r}
	\end{equation*}
\end{enumerate}
\end{prop}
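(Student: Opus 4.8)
The plan is to prove the two parts in sequence, using the blanket-set decomposition \eqref{eq:PPRblanket} as the engine for part~2.

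First I would establish part~1 by a crossing argument. The key preliminary observation is that $t$ itself lies in the target set: since a walk from $t$ terminates immediately at $t$ with probability $\PP[L=0]=\alpha$, we have $\PR^{-1}_t(t)=\PR_t(t)\geq\alpha>\e_r$, where the strict inequality uses the hypothesis $\e_r<\alpha$; hence $t\in T_t(\e_r)$. Combined with the hypothesis $s\notin T_t(\e_r)$, every path from $s$ to $t$ begins outside $T_t(\e_r)$ and ends inside it. I would then walk along any such path and locate the first vertex $v$ that belongs to $T_t(\e_r)$; since $s\notin T_t(\e_r)$, this $v$ is not the start vertex, so it has a predecessor $w$ on the path. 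By the minimality of $v$ we have $w\notin T_t(\e_r)$, and since $w\to v$ is an edge with $v\in T_t(\e_r)$, we get $w\in\bigcup_{u\in T_t(\e_r)}\N^{in}_u$. Thus $w\in F_t(\e_r)$, so the path meets the frontier, proving it is a blanket set. Note this also correctly handles the case where $s$ itself has an edge into $T_t(\e_r)$, since then $w=s\in F_t(\e_r)$.

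For part~2, I would invoke \eqref{eq:PPRblanket} with the blanket set $B=F_t(\e_r)$ supplied by part~1, writing $\PR_s(t)=\sum_{w\in F_t(\e_r)}\PP[H_{F_t(\e_r)}=w]\,\PR_w(t)$. The crucial bound is that every frontier vertex has small inverse-PPR: by definition $w\in F_t(\e_r)$ implies $w\notin T_t(\e_r)$, hence $\PR_w(t)=\PR^{-1}_t(w)\leq\e_r$. Substituting this into the sum and factoring out $\e_r$ leaves $\sum_{w\in F_t(\e_r)}\PP[H_{F_t(\e_r)}=w]$, which is exactly $\PP[H_{F_t(\e_r)}\neq\bot]=\PP[RW(s,L)\text{ hits }F_t(\e_r)]$. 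Rearranging $\PR_s(t)\leq\e_r\cdot\PP[RW(s,L)\text{ hits }F_t(\e_r)]$ yields the claimed lower bound.

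The argument is largely bookkeeping once \eqref{eq:PPRblanket} is in hand, so I do not anticipate a serious obstacle. The two points that need care are (i) justifying $t\in T_t(\e_r)$, which is precisely where the assumption $\e_r<\alpha$ enters and without which the frontier could be empty and part~1 would fail; and (ii) confirming that the first $T_t(\e_r)$-vertex on a path genuinely has an in-path predecessor, which is guaranteed exactly by $s\notin T_t(\e_r)$. Both hypotheses of the proposition are thus consumed, and the memorylessness of the geometric length underlying \eqref{eq:PPRblanket} does the remaining work.
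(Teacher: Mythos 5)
Your proposal is correct and follows essentially the same route as the paper's proof: establish $t\in T_t(\e_r)$ from $\PR_t(t)\geq\alpha>\e_r$, conclude that any $s$-to-$t$ path must cross the frontier, and then apply the blanket-set identity \eqref{eq:PPRblanket} with the bound $\PR_w(t)\leq\e_r$ on frontier nodes before rearranging. Your write-up merely spells out the first-crossing argument (the predecessor of the first $T_t(\e_r)$-vertex on the path) in more detail than the paper does.
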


\begin{proof}
For brevity, let $T_t=T_t(\e_r), F_t=F_t(\e_r)$. By definition, we know $\PR_t(t)\geq\alpha$ -- thus $t\in T_t$, since $\e_r<\alpha$. The frontier set $F_t$ contains all neighbors of nodes in $T_t$ which are not themselves in $T_t$ -- hence, for any source node $u\notin T_t$, a path to $t$ must first hit a node in $F_t$. 

For the second part, since $F_t$ is a blanket set for $s$ with respect to $t$, Eqn. \ref{eq:PPRblanket} implies 
$\PR_s(t)=\sum_{w\in F_t}\PP\left[H_{F_t}=w\right]\PR_w(t)$, where $H_{F_t}$ is the first node where the walk hits $F_t$. Note that by definition, $\forall\, w\in F_t$ we have $w\notin T_t$ -- thus $\PR_w(t)\leq\e_r$. Applying this bound, we get:
\begin{align*}
\PR_s(t)&\leq \e_r\sum_{w\in F_t}\PP\left[H_{F_t}=w\right]=\e_r\PP[RW(s,L)\mbox{ hits }F_t(\e_r)].
\end{align*}	
Rearranging, we get the result.
\end{proof}

\begin{figure}[t]
\centering
\includegraphics[width=0.9\columnwidth]{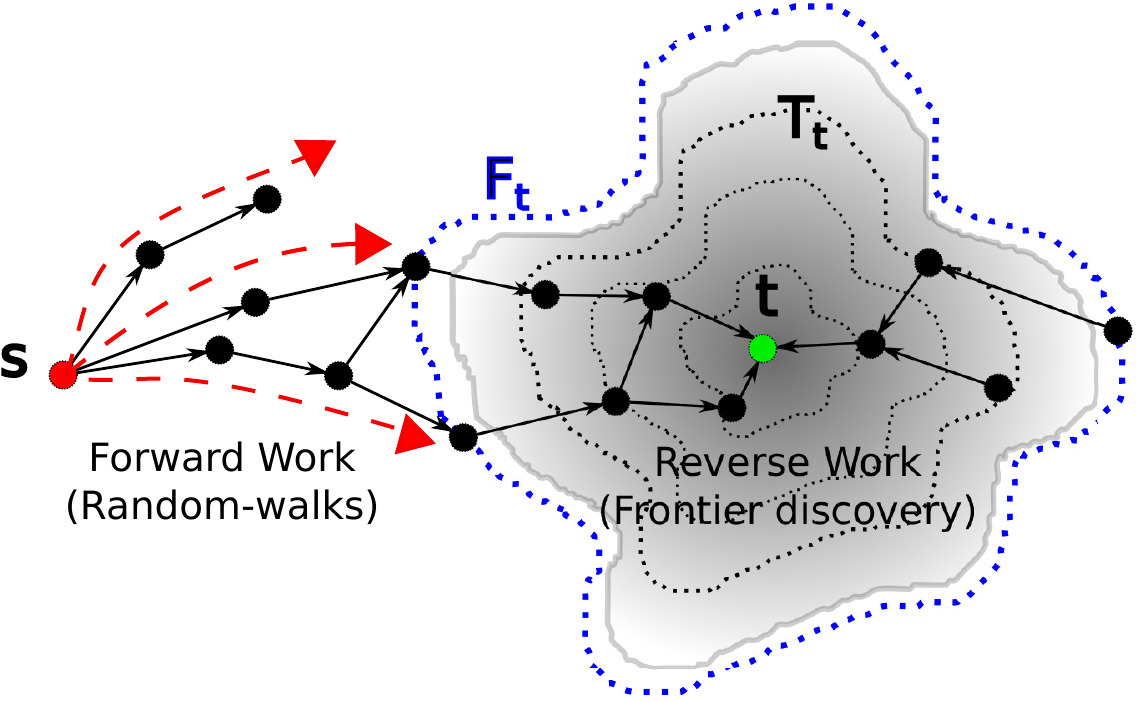}
\caption[FAST-PPR Algorithm]{The FAST-PPR Algorithm: We first work backward from the target $t$ to find the frontier $F_t$ (with inverse-PPR estimates). Next we work forward from the source $s$, generating random-walks and testing for hitting the frontier.}
\label{fig:fastppr_diag}
\end{figure}

The aim is to estimate $\PR_s(t)$ through Eqn.~\ref{eq:PPRblanket}.
By the previous proposition, $F_t(\e_r)$ is a blanket set.
We will determine the set $F_t(\e_r)$ and estimate all quantities
in the right side of Eqn.~\ref{eq:PPRblanket}, thereby estimating $\PR_s(t)$.

We perform a simple heuristic calculation to argue
that setting $\e_r \approx \sqrt{\delta}$ suffices to estimate $\PR_s(t)$. Previous work shows that $T_t(\e_r)$ can be found in $O(d/\e_r)$ time~\cite{Andersen2007,Lofgren2013} -- using this we can find $F_t(\e_r)$. Now suppose that we know all values of $\PR_w(t)$ ($\forall\, w \in F_t(\e_r)$). By Eqn.~\ref{eq:PPRblanket}, we need to estimate the probability of random walks from $s$ hitting vertices in $F_t(\e_r)$. By the previous proposition, the probability of hitting $F_t(\e_r)$ is at least $\delta/\e_r$ -- hence, we need $O(\e_r/\delta)$ walks from $s$ to ensure we hit $F_t(\e_r)$. All in all, we require $O(d/\e_r + \e_r/\delta)$ -- setting $\e_r = \sqrt{d \delta}$ we get a running-time bound of $O(\sqrt{d/\delta})$. In reality, however, we only have (coarse) PPR estimates for nodes in the frontier -- we show how these estimates can be boosted to get the desired guarantees, and also empirically show that, in practice, using the frontier estimates gives good results. Finally, we show that $1/\sqrt{\delta}$ is a fundamental lower bound for this problem.

We note that bi-directional techniques have been used for estimating fixed-length random walk probabilities in \emph{regular undirected graphs} \cite{Goldreich2011,Kale2008}. 

These techniques do not extend to estimating PPR -- in particular, we need to consider \emph{directed graphs, arbitrary node degrees and walks of random length}. Also, Jeh and Widom \cite{Jeh2003} proposed a scheme for PageRank estimation using intermediate estimates from a \emph{fixed skeleton} of target nodes. However there are no running-time guarantees for such schemes; also, the target nodes and partial estimates need to be pre-computed and stored. Our algorithm is fundamentally different as it constructs \emph{separate target sets for each target node} at query-time.

\section{The FAST-PPR Algorithm}
\label{sec:FastPPR}

We now develop the \emph{Frontier-Aided Significance Thresholding} algorithm, or FAST-PPR, specified in Algorithm \ref{alg:FASTPPR}. The input-output behavior of FAST-PPR is as follows:

\noindent$\bullet$ {\bf Inputs:} The primary inputs are graph $G$, teleport probability $\alpha$, start node $s$, target node $t$, and threshold $\delta$ -- for brevity, we henceforth suppress the dependence on $G$ and $\alpha$. We also need a \emph{reverse threshold $\e_r$} -- in subsequent sections, we discuss how this parameter is chosen.

\noindent$\bullet$ {\bf Output:} An estimate $\widehat{\PR}_s(t)$ for ${\PR}_s(t)$.

The algorithm also requires two parameters, $c$ and $\beta$ -- the former controls the number of random walks, while the latter controls the quality of our inverse-PPR estimates in the target set. In our pseudocode (Algorithms \ref{alg:FASTPPR} and \ref{alg:invPPR}), we specify the values we use in our experiments -- the theoretical basis for these choices is provided in Section \ref{ssec:fastpproracle}. 

\begin{algorithm}[ht]
\caption{FAST-PPR$(s,t,\delta)$}
\label{alg:FASTPPR}
\begin{algorithmic}[1] 
\REQUIRE Graph $G$, teleport probability $\alpha$, start node $s$, target node $t$, threshold $\delta$
\STATE Set accuracy parameters $c$, $\beta$ (in our experiments we use $c = 350$, $\beta=1/6$).

\STATE Call FRONTIER$(t,\e_r, \beta)$ to obtain target set $T_t(\e_r)$, frontier set $ F_t(\e_r)$, and inverse-PPR values $(\PR^{-1}_t(w))_{w\in F_t(\e_r) \cup T_t(\e_r)}$.
\IF{$s\in T_t(\e_r)$}
\RETURN $\PR^{-1}_t(s)$
\ELSE
\STATE Set number of walks $\samp=c\e_r/\delta$ \quad(cf. Theorem \ref{thm:fastpprmain})
\FOR{index $i\in [\samp]$}
\STATE Generate $L_i\sim Geom(\alpha)$
\STATE Generate random-walk $RW(s,L_i)$
\STATE Determine $H_i$, the first node in $F_t(\e_r)$ hit by $RW_i$; if $RW_i$ never hits $F_t(\e_r)$, set $H_i=\bot$
\ENDFOR 
\RETURN $\widehat{\PR}_s(t)=(1/\samp)\sum_{i\in[\samp]}\PR^{-1}_t(H_i)$
\ENDIF
\end{algorithmic}
\end{algorithm}

\begin{algorithm}[ht]
\caption{FRONTIER$(t,\e_r, \beta)$~\cite{Andersen2007,Lofgren2013}}
\label{alg:invPPR}
\begin{algorithmic}[1]
\REQUIRE Graph $G$, teleport probability $\alpha$, target node $t$, reverse threshold $\e_r$, accuracy factor $\beta$

\STATE Define additive error $\e_{inv}=\beta\e_r$
\STATE Initialize (sparse) estimate-vector $\widehat{\PR}^{-1}_t$ and (sparse) residual-vector $r_t$ as: 
$  \begin{cases}
   \widehat{\PR}^{-1}_t(u)=r_t(u)=0 \text{ if } u \neq t \\
   \widehat{\PR}^{-1}_t(t)=r_t(t)=\alpha
  \end{cases}$
\STATE Initialize target-set $\widehat{T}_t=\{t\}$, frontier-set $\widehat{F}_t=\{\}$  
\WHILE{$\exists w\in V\,s.t.\,r_t(w)>\alpha\e_{inv}$}
\FOR{$u\in\mathcal{N}^{in}(w)$}
\STATE      $\Delta =  (1-\alpha).\frac{r_t(w)}{d^{out}(u)}$
\STATE      $\widehat{\PR}^{-1}_t(u) = \widehat{\PR}^{-1}_t(u) + \Delta, r_t(u) = r_t(u) + \Delta$
\IF{$\widehat{\PR}^{-1}_t(u)>\e_r$}
\STATE $\widehat{T}_t=\widehat{T}_t \cup \{u\}\,,\, \widehat{F}_t=\widehat{F}_t \cup \mathcal{N}^{in}(u)$
\ENDIF
\ENDFOR
\STATE  Update $r_t(w)=0$
\ENDWHILE
\STATE $\widehat{F}_t=\widehat{F}_t\setminus \widehat{T}_t$

\RETURN $\widehat{T}_t,\widehat{F}_t, (\widehat{\PR}^{-1}_t(w))_{w\in \widehat{F}_t \cup \widehat{T}_t}$
\end{algorithmic}
\end{algorithm}    

FAST-PPR needs to know sets $T_t(\e_r),F_t(\e_r)$ and inverse-PPR values $(\PR^{-1}_t(w))_{w\in F_t(\e_r) \cup T_t(\e_r)}$. These can be obtained (approximately) from existing algorithms of \cite{Andersen2007,Lofgren2013}. For the sake of completeness, we provide pseudocode for the procedure FRONTIER (Algorithm~\ref{alg:invPPR}) that obtains this information. The following combines Theorems 1 and 2 in~\cite{Lofgren2013}.

\begin{theorem} 
	\label{thm:front} 
FRONTIER$(t,\e_r,\beta)$ algorithm computes estimates $\widehat{\PR}^{-1}_t(w)$ for every vertex $w$, with a guarantee that $\forall w$, $|\widehat{\PR}^{-1}_t(w) - {\PR}^{-1}_t(w)| < \beta \e_r$.
The average running time (over all choices of $t$) is $O(d/(\alpha\e_r))$, where $d = m/n$ is the average degree of the graph.
\end{theorem}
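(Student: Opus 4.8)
The plan is to derive both the accuracy and the running-time bound from a single conservation invariant maintained by the backward-push loop. Writing $p(\cdot)=\widehat{\PR}^{-1}_t(\cdot)$ and $r(\cdot)=r_t(\cdot)$ for the running estimate and residual vectors, I would prove that at every point in the execution, for all $v\in V$,
\[
\PR^{-1}_t(v)=p(v)+\frac1\alpha\sum_{w\in V}r(w)\bigl(\PR_v(w)-\alpha\mathds{1}[v=w]\bigr).
\]
At initialization only $r(t)=p(t)=\alpha$ are nonzero and the right-hand side collapses to $\alpha\mathds{1}[v=t]+(\PR_v(t)-\alpha\mathds{1}[v=t])=\PR_v(t)=\PR^{-1}_t(v)$, so the invariant holds. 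For the inductive step consider pushing a node $w_0$ with residual $\rho$: the loop adds $\Delta_u=(1-\alpha)\rho/d^{out}(u)$ to both $p(u)$ and $r(u)$ for each $u\in\mathcal{N}^{in}(w_0)$ and zeroes $r(w_0)$. Substituting these changes and using $\mathds{1}[v=u]+\frac1\alpha(\PR_v(u)-\alpha\mathds{1}[v=u])=\PR_v(u)/\alpha$, the net change reduces to $\frac{\rho}{\alpha}\bigl[(1-\alpha)\sum_{u\in\mathcal{N}^{in}(w_0)}\PR_v(u)/d^{out}(u)-(\PR_v(w_0)-\alpha\mathds{1}[v=w_0])\bigr]$, which vanishes by the forward recurrence \eqref{eq:PPR} applied to the source-$v$ vector $\PR_v$ at coordinate $w_0$. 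Hence the invariant is preserved by every push.

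The accuracy guarantee then follows immediately. Residuals are nonnegative throughout (pushes add only nonnegative increments), and each term $\PR_v(w)-\alpha\mathds{1}[v=w]$ is nonnegative since $\PR_w(w)\ge\alpha$. When the loop terminates every residual satisfies $r(w)\le\alpha\e_{inv}=\alpha\beta\e_r$, so
\[
\bigl|\PR^{-1}_t(v)-\widehat{\PR}^{-1}_t(v)\bigr|=\frac1\alpha\sum_{w}r(w)\bigl(\PR_v(w)-\alpha\mathds{1}[v=w]\bigr)\le\beta\e_r\sum_{w}\bigl(\PR_v(w)-\alpha\mathds{1}[v=w]\bigr)=(1-\alpha)\beta\e_r<\beta\e_r,
\]
using $\sum_{w}\PR_v(w)=1$. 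The same invariant, via nonnegativity of the residual sum, gives $\widehat{\PR}^{-1}_t(v)\le\PR^{-1}_t(v)=\PR_v(t)$ for all $v$, a fact I will reuse below.

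For the running time I would amortize the work over estimate increments. The cost of pushing $w_0$ is $\Theta(d^{in}(w_0))$, one unit per edge $(u\to w_0)$, and that unit raises $p(u)$ by $\Delta_u=(1-\alpha)\rho/d^{out}(u)>(1-\alpha)\alpha\beta\e_r/d^{out}(u)$, since a node is pushed only while $\rho>\alpha\beta\e_r$. Because $p(u)$ is nondecreasing and bounded above by $\PR_u(t)$, the number of edge-pushes into a fixed $u$ is at most $\PR_u(t)\,d^{out}(u)/\bigl((1-\alpha)\alpha\beta\e_r\bigr)$; summing over $u$ bounds the total work for target $t$ by $\frac{1}{(1-\alpha)\alpha\beta\e_r}\sum_u d^{out}(u)\PR_u(t)$ (finiteness of this sum also establishes termination). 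Averaging uniformly over the $n$ choices of $t$ and swapping summation order, $\frac1n\sum_t\sum_u d^{out}(u)\PR_u(t)=\frac1n\sum_u d^{out}(u)\sum_t\PR_u(t)=\frac1n\sum_u d^{out}(u)=m/n=d$, since $\sum_t\PR_u(t)=1$. This yields average running time $O\!\left(d/(\alpha\e_r)\right)$ with $\beta$ treated as a constant.

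The induction base/step verifications and the nonnegativity observations are routine; the step I expect to require the most care is the runtime amortization. In particular, one must charge each unit of edge-work to a quantified increase of a specific coordinate $p(u)$, bound that coordinate by the true inverse-PPR $\PR_u(t)$ through the invariant, and only then average over $t$ so that the per-source normalization $\sum_t\PR_u(t)=1$ and the degree sum $\sum_u d^{out}(u)=m$ combine into the average degree $d$. The other delicate point is pinning down the $\mathds{1}[v=w]$ correction term in the invariant, which appears precisely because the pseudocode credits the in-neighbors rather than the pushed node itself.
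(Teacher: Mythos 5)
Your proof is correct, and it is essentially the same argument as the paper's: the paper does not prove Theorem \ref{thm:front} itself but imports it from Theorems 1 and 2 of \cite{Lofgren2013} (building on \cite{Andersen2007}), whose proofs use exactly your two ingredients --- a loop invariant expressing $\PR^{-1}_t(v)$ as the current estimate plus a residual-weighted correction, and an amortization charging each unit of edge-work to a guaranteed increase of some coordinate $\widehat{\PR}^{-1}_t(u)$ bounded by $\PR_u(t)$, averaged over $t$ via $\sum_t \PR_u(t)=1$. Your only deviations are adaptations to this paper's pseudocode variant (both the estimate and the residual are incremented at in-neighbors, whence your $\alpha\mathds{1}[v=w]$ correction term, the slightly sharper one-sided bound $(1-\alpha)\beta\e_r$, and the underestimation property the paper separately asserts in Appendix \ref{appsec:imperfect_alt}), plus an extra constant factor $1/((1-\alpha)\beta)$ in the runtime, which is absorbed under the paper's convention that $\alpha$ and $\beta$ are constants.
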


Observe that the estimates $\widehat{\PR}^{-1}_t(w)$ are used to find approximate
target and frontier sets. Note that the running time for a given $t$ is proportional
to the frontier size $|\widehat{F}_t|$. It is relatively straightforward to argue (as in~\cite{Lofgren2013})
that $\sum_{t \in V} |\widehat{F}_t| = \Theta(nd/(\alpha\e_r))$.

In the subsequent subsections, we present theoretical analyses of the running times
and correctness of FAST-PPR. The correctness proof makes an excessively strong assumption
of perfect outputs for FRONTIER, which is not true. To handle this problem, we have a more
complex variant of FAST-PPR that can be proven theoretically (see Appendix \ref{appsec:imperfect_alt}). Nonetheless, our empirical
results show that FAST-PPR does an excellent job of estimate $\PR_s(t)$.

\subsection{Running-time of FAST-PPR}

\begin{theorem}
\label{thm:perffastppr}
Given parameters $\delta, \e_r$, the running-time of the FAST-PPR algorithm, averaged over uniform-random pairs $s,t$, is $O(\alpha^{-1} (d/\e_r + \e_r/\delta))$.

\end{theorem}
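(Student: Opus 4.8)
The plan is to bound the running time as the sum of two components corresponding to the two phases of FAST-PPR: the backward FRONTIER computation and the forward random-walk sampling. Since the theorem asks for the running time averaged over uniform-random pairs $(s,t)$, I would first observe that the randomness in the pair selection enters the two phases differently, and handle each phase separately before combining.

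\textbf{Step 1 (Backward phase).} For the FRONTIER$(t,\e_r,\beta)$ computation, I would invoke Theorem~\ref{thm:front} directly: it states that the average running time, taken over uniform-random choices of $t$, is $O(d/(\alpha\e_r))$. Since $s$ plays no role in the backward computation, averaging over $s$ as well leaves this bound unchanged. This accounts for the $O(\alpha^{-1} d/\e_r)$ term.

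\textbf{Step 2 (Forward phase).} The forward phase generates $\samp = c\e_r/\delta$ random walks from $s$, each of length $L_i \sim Geom(\alpha)$. The cost of a single walk is proportional to its length, and $\EE[L] = (1-\alpha)/\alpha = O(1/\alpha)$. Hence the expected cost of the forward phase for a fixed $s$ is $O(\samp/\alpha) = O(\e_r/(\alpha\delta))$; note this bound holds for every $s$, so averaging over $(s,t)$ again preserves it. (I should be slightly careful that each step of a walk also involves locating the current node in the frontier hash set, but under the standard assumption that such lookups are $O(1)$, the per-step cost is constant, so the walk cost is indeed proportional to its length.) This accounts for the $O(\alpha^{-1}\e_r/\delta)$ term.

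\textbf{Step 3 (Combine).} Summing the two phases gives total average running time $O(\alpha^{-1}(d/\e_r + \e_r/\delta))$, as claimed. I would also note the branch where $s \in T_t(\e_r)$: there the algorithm returns immediately after the backward phase, so its cost is dominated by Step~1 and does not affect the bound. The main subtlety — and the only place real care is needed — is the interplay between the averaging over $t$ (which is essential for Step~1, since for a worst-case $t$ the frontier can be as large as $\Theta(n)$ and the bound fails) and the fact that Step~2's bound is already worst-case in $s$; I would make explicit that the expectation over the product distribution factors so that the $t$-average controls the backward term while the forward term needs no averaging at all. Everything else is a routine application of linearity of expectation and the geometric-length expectation, so I do not expect a genuine obstacle beyond stating these averaging conventions cleanly.
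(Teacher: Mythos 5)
Your proof is correct and follows essentially the same route as the paper: the paper likewise bounds the reverse time by invoking Theorem~\ref{thm:front} for the average cost of FRONTIER over random $t$, bounds the forward time as $O(\e_r/\delta)$ walks of expected length $1/\alpha$ each, and sums the two. Your additional remarks (the $s \in T_t(\e_r)$ branch, constant-time frontier lookups, and the explicit factoring of the average over the product distribution) are details the paper's terse proof leaves implicit, but they do not change the argument.
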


\begin{proof}

Each random walk $RW(u,Geom(\alpha))$ takes $1/\alpha$ steps on average,
and there are $O(\e_r/\delta)$ such walks performed -- this is the \emph{forward time}. On the other hand, from Theorem \ref{thm:front}, we have that for a random $(s,t)$ pair, the average running time of FRONTIER is $O(d/(\alpha \e_r))$ -- this is the \emph{reverse time}. Combining the two, we get the result.
\end{proof}
Note that the reverse time bound above is averaged across choice of target node; for some target nodes (those with high global PageRank) the reverse time may be much larger than average, while for others it may be smaller.  However, the forward time is similar for all source nodes, and is predictable -- we exploit this in Section \ref{ssec:rebalancing} to design a balanced version of FAST-PPR which is much faster in practice. In terms of theoretical bounds, the above result suggests an obvious choice of $\e_r$ to optimize the running time:

\begin{corollary}
\label{corr:balance}
Set $\e_r=\sqrt{d\delta}$. Then FAST-PPR has an average per-query running-time of
$O(\alpha^{-1}\sqrt{d/\delta})$.
\end{corollary}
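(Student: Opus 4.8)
The plan is to substitute the prescribed choice $\e_r = \sqrt{d\delta}$ directly into the running-time bound from Theorem~\ref{thm:perffastppr} and simplify. Since Theorem~\ref{thm:perffastppr} already establishes that the average per-query running-time is $O(\alpha^{-1}(d/\e_r + \e_r/\delta))$, this corollary is really just an optimization step: I would recognize that the bound is a sum of two terms, one decreasing in $\e_r$ (the reverse time $d/\e_r$) and one increasing in $\e_r$ (the forward time $\e_r/\delta$), and that such a sum is minimized by balancing the two terms.

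First I would set the two terms equal to find the optimal $\e_r$: solving $d/\e_r = \e_r/\delta$ gives $\e_r^2 = d\delta$, hence $\e_r = \sqrt{d\delta}$, which is exactly the prescribed value. This is the standard balancing trick, and it explains the name of the corollary. Next I would plug $\e_r = \sqrt{d\delta}$ into each term separately: the reverse term becomes $d/\sqrt{d\delta} = \sqrt{d/\delta}$, and the forward term becomes $\sqrt{d\delta}/\delta = \sqrt{d/\delta}$. Both terms thus equal $\sqrt{d/\delta}$, so their sum is $2\sqrt{d/\delta} = O(\sqrt{d/\delta})$. Carrying the $\alpha^{-1}$ factor through, the total average running-time is $O(\alpha^{-1}\sqrt{d/\delta})$, which is the claimed bound.

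There is essentially no hard part here: the entire content is the algebraic substitution plus the observation that balancing a sum of the form $A/\e_r + \e_r/B$ is optimized at $\e_r = \sqrt{AB}$ (with $A = d$ and $B = \delta$). The only thing worth being careful about is confirming that $\e_r = \sqrt{d\delta}$ is a legitimate choice of parameter, i.e.\ that it satisfies any constraints inherited from Proposition~\ref{prop:frontier}, which required $\e_r < \alpha$. Since we are in the regime of interest where $\delta = O(1/n)$ is small, $\sqrt{d\delta}$ will indeed be small relative to the constant $\alpha$ for large graphs, so the constraint is satisfied; I would note this in passing but would not expect it to require real work. The proof is therefore a one- or two-line computation.

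\begin{proof}
The average running-time from Theorem~\ref{thm:perffastppr} is $O(\alpha^{-1}(d/\e_r + \e_r/\delta))$, a sum of a term decreasing in $\e_r$ and a term increasing in $\e_r$. Setting the two terms equal, $d/\e_r = \e_r/\delta$, yields the balancing choice $\e_r = \sqrt{d\delta}$. Substituting this value, the reverse term becomes $d/\sqrt{d\delta} = \sqrt{d/\delta}$ and the forward term becomes $\sqrt{d\delta}/\delta = \sqrt{d/\delta}$. Hence the total average per-query running-time is
\begin{equation*}
O\!\left(\alpha^{-1}\left(\sqrt{d/\delta} + \sqrt{d/\delta}\right)\right) = O\!\left(\alpha^{-1}\sqrt{d/\delta}\right),
\end{equation*}
as claimed.
\end{proof}
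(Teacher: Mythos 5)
Your proof is correct and matches the paper's (implicit) argument exactly: the paper treats Corollary~\ref{corr:balance} as an immediate consequence of Theorem~\ref{thm:perffastppr}, obtained by the same balancing substitution $\e_r = \sqrt{d\delta}$ that equates the reverse term $d/\e_r$ and forward term $\e_r/\delta$. Your side remark checking the constraint $\e_r < \alpha$ from Proposition~\ref{prop:frontier} is a reasonable extra precaution that the paper does not bother to state.
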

\vfill\eject 
\subsection{FAST-PPR with Perfect FRONTIER}
\label{ssec:fastpproracle}

We now analyze FAST-PPR in an idealized setting, where we assume that FRONTIER returns \emph{exact inverse-PPR estimates} -- i.e., the sets $T_t(\e_r)$, $F_t(\e_r)$, and the values $\{\PR^{-1}_t(w)\}$ are known exactly. This is an unrealistic assumption, but it gives much intuition
into \emph{why} FAST-PPR works. In particular, we show that if $\pi(s,t)> \delta$, then with probability at least 99\
\begin{theorem}
\label{thm:fastpprmain} For any $s,t,\delta,\e_r$,
FAST-PPR outputs an estimate $\widehat{\PR}_s(t)$ such
that with probability $>0.99$:
$$|\PR_s(t)-\widehat{\PR}_s(t)| \leq \max(\delta,\PR_s(t))/4.$$
\end{theorem}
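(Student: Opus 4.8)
The plan is to show that, conditioned on $s\notin T_t(\e_r)$, the algorithm's output is an empirical average of $\samp$ i.i.d.\ bounded random variables whose common mean is exactly $\PR_s(t)$, and then to invoke a Bernstein-type concentration inequality (equivalently, a Chernoff bound for $[0,1]$-bounded variables). First I would dispose of the easy case $s\in T_t(\e_r)$: since FRONTIER is assumed perfect, the algorithm returns $\PRi_t(s)=\PR_s(t)$ exactly, so the error is zero and the claimed bound holds trivially. Henceforth assume $s\notin T_t(\e_r)$, which is precisely the hypothesis needed to apply Proposition~\ref{prop:frontier}, so that $F_t(\e_r)$ is a blanket set for $t$ with respect to $s$.

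The core observation is unbiasedness. For each $i$ let $X_i=\PRi_t(H_i)$, with the convention $\PRi_t(\bot)=0$ corresponding to walks that never hit the frontier. Since $H_i$ is the first node of $F_t(\e_r)$ hit by $RW(s,L_i)$, the blanket-set decomposition of Eqn.~\ref{eq:PPRblanket} gives
$$\EE[X_i]=\sum_{w\in F_t(\e_r)}\PP[H_i=w]\,\PRi_t(w)=\PR_s(t),$$
and hence $\EE[\widehat{\PR}_s(t)]=\PR_s(t)$. The second ingredient is boundedness: every $w\in F_t(\e_r)$ satisfies $w\notin T_t(\e_r)$ and therefore $\PRi_t(w)\le\e_r$, so $X_i\in[0,\e_r]$ for all $i$.

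With these in hand I would normalize $Y_i:=X_i/\e_r\in[0,1]$ and set $S=\sum_{i=1}^{\samp}Y_i$, so that $\mu_S:=\EE[S]=\samp\,\PR_s(t)/\e_r=c\,\PR_s(t)/\delta$ after substituting $\samp=c\e_r/\delta$. Writing $\widehat{\PR}_s(t)=(\e_r/\samp)S$, the target bound $|\widehat{\PR}_s(t)-\PR_s(t)|\le\max(\delta,\PR_s(t))/4$ is equivalent to the deviation event $|S-\mu_S|\le\tfrac{c}{4}\max(1,\PR_s(t)/\delta)$. Because each $Y_i\in[0,1]$ we have $\mathrm{Var}(Y_i)\le\EE[Y_i^2]\le\EE[Y_i]$, hence $\mathrm{Var}(S)\le\mu_S$; feeding this variance bound and the deviation $a=\tfrac{c}{4}\max(1,\PR_s(t)/\delta)$ into Bernstein's inequality yields a failure probability of at most $2\exp(-3c/104)$, uniformly over both regimes. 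Choosing $c$ a large enough constant (the value $c=350$ used by the algorithm suffices) drives this below $0.01$.

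The main obstacle is packaging the two-regime behaviour of the target $\max(\delta,\PR_s(t))/4$ into a single concentration statement: when $\PR_s(t)$ sits well above $\delta$ the requirement is really a $1/4$ \emph{relative} (multiplicative) error, whereas when $\PR_s(t)$ lies below $\delta$ it is an \emph{additive} $\delta/4$ error, and a naive multiplicative Chernoff bound needs a separate argument (with a bounded relative-deviation parameter) in each case. The device that sidesteps this is the variance-bounded-by-mean inequality $\mathrm{Var}(S)\le\mu_S$ used inside Bernstein: it interpolates automatically between the additive and multiplicative regimes and produces the same exponent $3c/104$ regardless of the size of $\PR_s(t)$, after which the choice of $c$ closes the argument. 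The only care needed is checking the arithmetic of this exponent in the boundary regime $\PR_s(t)\approx\delta$, which is routine.
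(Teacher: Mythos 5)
Your proposal is correct, and it shares the paper's setup exactly -- the same estimator $X_i=\PRi_t(H_i)$ with $\PRi_t(\bot)=0$, unbiasedness via the blanket-set identity (Eqn.~\ref{eq:PPRblanket}, justified through Proposition~\ref{prop:frontier}), boundedness $X_i\le\e_r$ because frontier nodes lie outside the target set, and the normalization $Y_i=X_i/\e_r\in[0,1]$ -- but it diverges at the concentration step in a genuine way. The paper performs a case split on $\PR_s(t)$ versus $\delta/(4e)$ and uses two different tail bounds: a multiplicative Chernoff bound $\PP[|Y-\EE[Y]|>\EE[Y]/4]<\exp(-\EE[Y]/48)$ when the mean is large, and the heavy upper-tail bound $\PP[Y>b]\le 2^{-b}$ for $b>2e\EE[Y]$ when $\PR_s(t)$ is small (in which case it only controls the upper tail and concludes $|\widehat{\PR}_s(t)-\PR_s(t)|\le\delta/4$ from both quantities being small). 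You instead absorb both regimes into a single Bernstein application using $\mathrm{Var}(S)\le\mu_S$, and your arithmetic checks out: with $a=\tfrac14\max(c,\mu_S)=:M/4$ and $\mathrm{Var}(S)+a/3\le 13M/12$, the exponent is $(M^2/32)/(13M/12)=3M/104\ge 3c/104$, and $2\exp(-3\cdot 350/104)\approx 8\times 10^{-5}<0.01$, so $c=350$ indeed suffices (the paper's unoptimized $c=\max(48\cdot 8e\ln 100,\,4\log_2 100)\approx 4800$ is much larger). What each buys: the paper's route uses only off-the-shelf Chernoff bounds as stated in~\cite{DuPa09} at the cost of a case analysis and a worse constant; yours is more uniform and makes explicit the conceptual point that variance-bounded-by-mean interpolates between the additive regime ($\PR_s(t)\le\delta$) and the multiplicative regime ($\PR_s(t)>\delta$). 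You also handle the $s\in T_t(\e_r)$ branch explicitly, which the paper's proof silently omits. One shared caveat, not a gap relative to the paper: both arguments implicitly need $\e_r<\alpha$ (the hypothesis of Proposition~\ref{prop:frontier} guaranteeing the frontier is a blanket set), even though the theorem statement says ``for any $\e_r$.''
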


\begin{proof} 
We choose $c=\max \left(48 \cdot 8 e \ln(100), 4 \log_2(100)\right)$; this choice of parameter $c$ is for ease of exposition in the computations below, and has not been optimized. 

To prove the result, note that FAST-PPR performs $\samp = c\e_r/\delta$
i.i.d. random walks $RW(s,L)$. We use $H_i$ to denote the first node in
$F_t(\e_r)$ hit by the $i$th random walk. Let $X_i = \PR^{-1}_t(H_i)$ and
$X = \sum_{i=1}^\samp X_i$. By Eqn. \ref{eq:PPRblanket}, $\EE[X_i] =
\PR_s(t)$, so $\EE[X] = \samp \PR_s(t)$. Note that $\widehat{\PR}_s(t) =
X/\samp$. As result, $|\PR_s(t)-\widehat{\PR}_s(t)|$ is exactly $|X -
\EE[X]|/\samp$.

It is convenient to define scaled random variables $Y_i = X_i/\e_r$
and $Y = X/\e_r$ before we apply standard Chernoff bounds.  
We have $|\PR_s(t)-\widehat{\PR}_s(t)| = (\e_r/\samp)|Y - \EE[Y]|$
$ = (\delta/c)|Y - \EE[Y]|$. Also, $\PR_s(t) = (\delta/c)\EE[Y]$.
Crucially, because  $H_i \in F_t(\e_r)$, $X_i=\PR_{H_i}(t) < \e_r$, so   $Y_i \leq 1$.
Hence, we can apply the following two Chernoff bounds (refer to Theorem 1.1 in~\cite{DuPa09}):
\begin{enumerate}
\item $\PP[|Y - \EE[Y]| > \EE[Y]/4] < \exp(-\EE[Y]/48)$
\item $\textrm{For any } b > 2e\EE[Y], \PP[Y > b] \leq 2^{-b}$
\end{enumerate}

Now, we perform a case analysis. Suppose $\PR_s(t) > \delta/(4e)$.
Then $\EE[Y] > c/(4e)$, and
\begin{eqnarray*}
	\PP[|\PR_s(t)-\widehat{\PR}_s(t)| > \PR_s(t)/4] & = & \PP[|Y - \EE[Y]| > \EE[Y]/4] \\
	& < & \exp(-c/48\cdot 4e) < 0.01
\end{eqnarray*}
Suppose $\PR_s(t) \leq \delta/(4e)$.
Then, $\delta/4 > 2e \PR_s(t)$ implying $c/4 > 2e \EE[Y]$. By the upper tail:

\begin{eqnarray*}
	\PP[\widehat{\PR}_s(t) > \delta/4] = \PP[Y > c/4] \leq 2^{-c/4} < 0.01
\end{eqnarray*}

The proof is completed by trivially combining both cases.
\end{proof}

\section{Lower bound for PPR Estimation}
\label{sec:lower}

In this section, we prove that any algorithm that accurately estimates PPR queries up to a threshold $\delta$ must look at $\Omega(1/\sqrt{\delta})$ edges of the graph. 
Thus, our algorithms have the optimal dependence on $\delta$. The numerical constants below are chosen for easier calculations, and are not optimized. 

We assume $\alpha = 1/100\log(1/\delta)$, and consider randomized algorithms for the following variant of Significant-PPR, which we denote as \signif$(\delta)$ -- for all pairs $(s,t)$:
\begin{itemize}[nolistsep,noitemsep]
\item If $\PR_s(t) > \delta$, output ACCEPT with probability $> 9/10$.
\item If $\PR_s(t) < \frac{\delta}{2}$, output REJECT with probability $>9/10$.
\end{itemize}
We stress that the probability is over the random choices of the algorithm, \emph{not} over $s,t$. We now have the following lower bound:
\begin{theorem} 
\label{thm:lb} 
Any algorithm for \signif$(\delta)$ must access $\Omega(1/\sqrt{\delta})$ edges of the graph.
\end{theorem}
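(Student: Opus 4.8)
The plan is to prove the bound via Yao's minimax principle: I will exhibit a distribution $\mu$ over inputs, each labeled ACCEPT ($\PR_s(t)>\delta$) or REJECT ($\PR_s(t)<\delta/2$), such that any \emph{deterministic} algorithm accessing fewer than $c/\sqrt{\delta}$ edges answers \signif$(\delta)$ correctly on at most, say, $3/4$ of the mass of $\mu$. Since the specification demands success probability $>9/10$ on every instance (hence on average under $\mu$), and a randomized algorithm is a distribution over deterministic ones, this forces the randomized query cost to be $\Omega(1/\sqrt{\delta})$. Throughout I model graph access as adjacency-list probes — one probe reveals a single out-neighbor or a single in-neighbor of a queried vertex — and count probes as ``edges accessed''; a random walk of length $\ell$ is then simply a sequence of $\ell$ forward probes, so this model also captures Monte-Carlo algorithms.

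The hard instances are a ``planted two-hop path'' family. Fix $M=\Theta(1/\delta)$ and set $k=\sqrt{M}=\Theta(1/\sqrt{\delta})$. The source $s$ has $k$ out-neighbors $a_1,\dots,a_k$; one of them, $w=a_{i^*}$ with $i^*$ uniform, is \emph{special} and has its own $k$ out-neighbors $b_1,\dots,b_k$, while the other $a_i$ are near-sinks returning mass toward $s$. One of $w$'s out-neighbors, $x^*=b_{j^*}$ with $j^*$ uniform, is designated, and $t$ has $k$ decoy in-neighbors $c_1,\dots,c_k$ (all pointing to $t$, none reachable from $s$). The ACCEPT/REJECT bit is encoded by a single edge: an ACCEPT instance adds $x^*\to t$, a REJECT instance does not (routing $x^*$ to a sink instead). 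A short computation with the blanket-set decomposition (Eqn.~\ref{eq:PPRblanket}) shows that in an ACCEPT instance the only $s$--$t$ path is $s\to w\to x^*\to t$ and contributes $\PR_s(t)=\Theta(1/M)$ — the two out-degree-$k$ branchings supply the two factors of $\sqrt{\delta}$ — while in a REJECT instance $t$ is unreachable from $s$ and $\PR_s(t)=0$; tuning the constant in $M=\Theta(1/\delta)$ makes the former exceed $\delta$. The teleport value $\alpha=1/(100\log(1/\delta))$ is used in two ways: it is small enough that the length-three path survives teleportation with probability $(1-\alpha)^{3}=1-o(1)$, so the $\Theta(1/M)$ estimate is not spoiled, and it keeps the expected walk length $1/\alpha=O(\log(1/\delta))=o(k)$, so a single sampled walk touches only $o(1/\sqrt{\delta})$ edges and cannot by itself reveal the planted structure.

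The core is an indistinguishability (adversary) claim: ACCEPT and REJECT instances drawn from $\mu$ return \emph{identical} answers to every probe until the algorithm probes the planted edge $x^*\to t$ itself, i.e. until it either queries the out-list of $x^*$, or queries the in-list of $t$ and then certifies a discovered in-neighbor as reachable from $s$. Because $w$ is a uniformly random one of $s$'s $k$ out-neighbors and, conditioned on $w$, $x^*$ is a uniformly random one of $w$'s $k$ out-neighbors, an algorithm reaching $x^*$'s out-list from the forward side must have probed the out-lists of $\Omega(k)$ candidates; symmetrically, $t$'s in-list has $k$ entries and separating $x^*$ from the $k$ decoys $c_i$ requires probing $\Omega(k)$ of their in-lists from the backward side. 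Formalizing this, I would show that after $q$ probes the planted edge has been touched with probability $O(q/k)$, so for $q=o(k)=o(1/\sqrt{\delta})$ the transcript is (with probability $1-o(1)$) the same under the ACCEPT and REJECT couplings, and the algorithm must err on one of them. I expect the main obstacle to be exactly this step: ruling out \emph{every} way of combining cheap forward probing, backward probing, and random-walk sampling to locate the single planted edge faster than $\Theta(1/\sqrt{\delta})$ — in particular, ensuring the construction leaks no side-channel (such as a distinctive degree) present in ACCEPT but absent in REJECT, which is why $w$, its branching, and $t$'s decoy in-neighbors are made identical in both instances and only the terminal edge $x^*\to t$ differs. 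The remaining work — checking that the two PPR values straddle $\delta$ and $\delta/2$ under the stated $\alpha$ — is a routine calculation.
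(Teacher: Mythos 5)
Your approach is genuinely different from the paper's: the paper proves Theorem~\ref{thm:lb} by a black-box reduction to Goldreich--Ron expansion testing (Theorem~\ref{thm:GR}), sampling uniform $s,t$ in a graph drawn from $\cG_1$ (three random matchings, hence an expander in which \emph{every} pair has $\PR_s(t)>\delta$ once $\alpha=1/100\log(1/\delta)$ lets the geometric walk-length tail reach the mixing time) versus $\cG_2$ (four disconnected pieces, so $\PR_s(t)=0$ for $3/4$ of pairs), so that a cheap \signif$(\delta)$ algorithm would yield a distinguisher with fewer than $\sqrt{N}/5$ probes. You instead build a bespoke planted-path family and argue indistinguishability directly via Yao's principle. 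That route is legitimate in principle, and your PPR accounting is tunable (though note two glossed constants: the ``near-sinks returning mass toward $s$'' recycle walk mass $\Theta(1/\alpha)=\Theta(\log(1/\delta))$ times, and $\PR_s(t)$ is the probability the walk \emph{ends} at $t$, contributing a factor $\PR_t(t)$, which is only guaranteed to be $\geq\alpha$; both must be absorbed into the choice of $k$).

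However, there are two genuine gaps. First, your construction as stated leaks a degree side-channel that contradicts your own invariant that ``only the terminal edge $x^*\to t$ differs'': deleting $x^*\to t$ in REJECT changes the in-degree of $t$ from $k+1$ to $k$, and in the standard adjacency-list model (the one Goldreich--Ron's bound is proved in, where degree queries cost one probe) this distinguishes ACCEPT from REJECT in $O(1)$ probes. This is patchable -- give $t$ a compensating decoy in-neighbor in REJECT so degree sequences are identical, and randomize vertex labels -- but the patch must be part of the construction, and you must then re-check that the decoy's local neighborhood (its parent chain) is statistically identical to $x^*$'s except for the identity match with the forward structure. Second, and more fundamentally, the ``$O(q/k)$ per-probe collision'' transcript-coupling lemma is asserted rather than proved, and it is the entire technical content of the theorem: one must formalize that under random labels the ACCEPT and REJECT transcript distributions coincide until the algorithm either probes $x^*$'s out-list or produces an ID collision between nodes discovered forward from $s$ and backward from $t$, covering arbitrary interleavings of degree queries, list probes, and walk sampling. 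The paper's reduction exists precisely to outsource this birthday-type indistinguishability argument to Theorem~7.5 of~\cite{GoRo97}; you have in effect committed to reproving it for your gadget, acknowledge that you have not done so, and so the proposal is an incomplete proof until that lemma is supplied.
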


\begin{proof}[Outline]
The proof uses a lower bound of Goldreich and Ron for \emph{expansion testing}~\cite{GoRo97}. The technical content of this result is the following -- consider two distributions $\cG_1$ and $\cG_2$ of undirected $3$-regular graphs on $N$ nodes. A graph in $\cG_1$ is generated
by choosing three uniform-random perfect matchings of the nodes. A graph in $\cG_2$ is generated by randomly partitioning the nodes into $4$ equally sized sets, $V_i, i \in \{1,2,3,4\}$, and then, within each $V_i$, choosing three uniform-random matchings. 

Consider the problem of distinguishing $\cG_1$ from $\cG_2$. An adversary arbitrarily picks one of these distributions, and generates a graph $G$ from it. A \emph{distinguisher} must report whether $G$ came from $\cG_1$ or $\cG_2$, and it must be correct with probability $> 2/3$ regardless of the distribution chosen. Theorem 7.5 of Goldreich and Ron~\cite{GoRo97} asserts the following:
\begin{theorem}[Theorem 7.5 of \cite{GoRo97}] 
\label{thm:GR} 
Any distinguisher must look at $\sqrt{N}/5$ edges of the graph.
\end{theorem}
We perform a direct reduction, relating the \signif$(\delta)$ problem to the Goldreich and Ron setting -- in particular, we show that an algorithm for \signif$(\delta)$ which requires less than $1/\sqrt{\delta}$ queries can be used to construct a distinguisher which violates Theorem \ref{thm:GR}. The complete proof is provided in Appendix \ref{appsec:lower}.
\end{proof}

\section{Further variants of FAST-PPR} \label{sec:variant}

The previous section describes vanilla FAST-PPR, with a proof of
correctness assuming a perfect FRONTIER. We now present some variants of
FAST-PPR. We give a theoretical variant that is a truly provable
algorithm, with no assumptions required -- however, vanilla FAST-PPR is
a much better practical candidate and it is what we implement. We also discuss how we can use pre-computation and storage to obtain worst-case guarantees for FAST-PPR. Finally, from the practical side, we discuss a workload-balancing heuristic that provides significant improvements 
in running-time by dynamically adjusting $\e_r$.

\subsection{Using Approximate Frontier-Estimates}
\label{ssec:fastpprapprox}

The assumption that FRONTIER returns perfect estimates is theoretically
untenable -- Theorem~\ref{thm:front} only ensures that each inverse-PPR
estimate is correct up to an additive factor of $\e_{inv}=\beta \e_r$. It
is plausible that for every $w$ in the frontier $\widehat{F}_t(\e_r)$,
$\PR^{-1}_t(w) < \e_{inv}$, and FRONTIER may return a zero estimate for
these PPR values. It is not clear how to use these noisy
inverse-PPR estimates to get the desired accuracy guarantees.

To circumvent this problem, we observe that estimates $\PR^{-1}_t(w)$
for any node $w \in \widehat{T}_t(\e_r)$ \emph{are in fact accurate up to a multiplicative factor}. We design a procedure to bootstrap these `good' estimates, by using a special `target-avoiding' random walk -- this modified algorithm gives the desired accuracy guarantee with only an additional $\log(1/\delta)$ factor in running-time. The final algorithm and proof are quite intricate -- due to lack of space, the details are deferred to  Appendix \ref{appsec:imperfect_alt}.

\subsection{Balanced FAST-PPR}
\label{ssec:rebalancing}

In FAST-PPR, the parameter $\epsilon_r$ can be chosen freely while preserving accuracy.  Choosing a larger value leads to a smaller frontier and less forward work at the cost of more reverse work; a smaller value requires more reverse work and fewer random walks. To improve performance in practice, we can optimize the choice of $\epsilon_r$ based on the target $t$, to balance the reverse and forward time. Note that for any value of $\epsilon_r$, the forward time is proportional to $k=c\e_r/\delta$ (the number of random walks performed) -- for any choice of $\e_r$, it is easy to estimate the forward time required.
Thus, instead of committing to a single value of $\epsilon_r$, we propose a heuristic, BALANCED-FAST-PPR, wherein we \emph{dynamically decrease $\epsilon_r$  until the estimated remaining forward time equals the reverse time already spent}.  

We now describe BALANCED-FAST-PPR in brief: Instead of pushing from any node $w$ with residual $r_t(w)$ above a fixed threshold $\alpha \epsilon_{inv}$, we now push from the node $w$ with the largest residual value -- this follows a similar algorithm proposed in \cite{Lofgren2013}. From \cite{Andersen2007,Lofgren2013}, we know that a current maximum residual value of $r_{max}$ implies an additive error guarantee of $\frac{r_{max}}{\alpha}$ -- this corresponds to a dynamic $\epsilon_r$ value of $\frac{r_{max}}{\alpha \beta}$. At this value of $\epsilon_r$, the number of forward walks required is $k=c\e_r/\delta$. By multiplying $k$ by the average time needed to generate a walk, we get a good prediction of the amount of forward work still needed -- we can then compare it to the time already spent on reverse-work and adjust $\epsilon_r$ until they are equal.  Thus BALANCED-FAST-PPR is able to dynamically choose $\epsilon_r$ to balance the forward and reverse running-time. Complete pseudocode is in Appendix \ref{appsec:balanced_code}.  In Section \ref{ssec:balanceexpt}, we experimentally show how this change balances forward and reverse running-time, and significantly reduces the average running-time.  

\subsection{FAST-PPR using Stored Oracles}
\label{ssec:storage}

All our results for FAST-PPR have involved average-case running-time bounds. To convert these to worst-case running-time bounds, we can pre-compute and store the frontier for all nodes, and only perform random walks at query time. To obtain the corresponding storage requirement for these \emph{Frontier oracles}, observe that for any node $w\in V$, it can belong to the target set of at most $\frac{1}{\e_r}$ nodes, as $\sum_{t\in V}\PR^{-1}_t(w)=1$. Summing over all nodes, we have:
\begin{align*}
\mbox{Total Storage}&\leq\sum_{t\in V}\sum_{w\in T_t}\sum_{u\in\mathcal{N}^{in}(w)}\mathds{1}_{\{u\in F_t\}}\\
&\leq\sum_{w\in V}\sum_{t\in V:w\in T_t}d^{in}(w)\leq\frac{m}{\e_r}
\end{align*}

To further cut down on running-time, we can also pre-compute and store the random-walks from all nodes, and perform appropriate joins at query time to get the FAST-PPR estimate. This allows us to implement FAST-PPR on any distributed system that can do fast intersections/joins. More generally, it demonstrates how the modularity of FAST-PPR can be used to get variants that trade-off between different resources in practical implementations.

\section{Experiments}
\label{sec:expts}

\noindent We conduct experiments to explore three main questions:
\begin{enumerate}[nolistsep,noitemsep]
\item How fast is FAST-PPR relative to previous algorithms?
\item How accurate are FAST-PPR's estimates?
\item How is FAST-PPR's performance affected by our design choices: use of frontier and balancing forward/reverse running-time?  
\end{enumerate}

\subsection{Experimental Setup}
\label{ssec:exptsetup}

\noindent\textbf{$\bullet$ Data-Sets:} To measure the robustness of FAST-PPR, we run our experiments on several types and sizes of graph, as described in Table \ref{table:graphs}.  
\begin{table}
\centering
\caption{Datasets used in experiments}
\label{table:graphs}
\begin{tabular}{|c|c|c|c|} \hline
Dataset & Type & \# Nodes & \# Edges\\ \hline
DBLP-2011 & undirected & 1.0M & 6.7M\\ \hline
Pokec & directed & 1.6M & 30.6M\\ \hline
LiveJournal & undirected & 4.8M & 69M\\ \hline
Orkut & undirected & 3.1M & 117M\\ \hline
Twitter-2010 & directed & 42M & 1.5B\\ \hline
UK-2007-05 & directed & 106M & 3.7B\\ \hline
\end{tabular}
\end{table}

Pokec and Twitter are both social networks in which edges are directed.  
The LiveJournal, Orkut (social networks) and DBLP (collaborations on papers) networks are all undirected -- for each, we have the largest connected component of the overall graph. Finally, our largest dataset with $3.7$ billion edges is from a 2007 crawl of the UK domain \cite{BRSLLP, BSVLTAG}.  Each vertex is a web page and each edge is a hyperlink between pages. 

For detailed studies of FAST-PPR, we use the Twitter-2010 graph, with $41$ million users and $1.5$ billion edges. This presents a further algorithmic challenge because of the skew of its degree distribution: the average degree is $35$, but one node has more than $700,000$ in-neighbors.  

The Pokec \cite{takac2012data}, Live Journal \cite{mislove-2007-socialnetworks}, and Orkut \cite{mislove-2007-socialnetworks} datasets were downloaded from the Stanford SNAP project \cite{SnapProject}. The  DBLP-2011 \cite{BRSLLP}, Twitter-2010 \cite{BRSLLP} and UK 2007-05 Web Graph \cite{BRSLLP, BSVLTAG} were downloaded from the Laboratory for Web Algorithmics \cite{WebAlgorithmics}.  

\noindent\textbf{$\bullet$ Implementation Details:} We ran our experiments on a machine with a 3.33 GHz 12-core Intel Xeon X5680 processor, 12MB cache, and 192 GB of 1066 MHz Registered ECC DDR3 RAM.  
Each experiment ran on a single core and loaded the graph used into memory before beginning any timings.  The RAM used by the experiments was dominated by the RAM needed to store the largest graph using the SNAP library format \cite{SnapProject}, which was about 21GB.

For reproducibility, our C++ source code is available at:
\url{http://cs.stanford.edu/~plofgren/fast_ppr/}

\noindent\textbf{$\bullet$ Benchmarks:} We compare FAST-PPR to two benchmark algorithms: Monte-Carlo and Local-Update. 

Monte-Carlo refers to the standard random-walk algorithm~\cite{Avrachenkov2007,Bahmani2010,Borgs2013,Sarma2013} -- we perform $\frac{c_{MC}}{\delta}$ walks and estimate $\PR_u(v)$ by the fraction of walks terminating at $v$. For our experiments, we choose $c_{MC}=35$, to ensure that the relative errors for Monte-Carlo are the same as the relative error bounds chosen for Local-Update and FAST-PPR (see below). However, even in experiments with $c_{MC}=1$, we find that FAST-PPR is still 3 times faster on all graphs and 25 times faster on the largest two graphs (see Appendix \ref{appsec:expts}).

Our other benchmark, Local-Update, is the state-of-the-art local power iteration algorithm~\cite{Andersen2007,Lofgren2013}. It follows the same procedure as the FRONTIER algorithm (Algorithm \ref{alg:invPPR}), but with the additive accuracy $\e_{inv}$ set to $\delta/2$. Note that a backward local-update is more suited to computing PPR forward schemes~\cite{Andersen2006,Jeh2003} as the latter lack natural performance guarantees on graphs with high-degree nodes.

\noindent\textbf{$\bullet$ Parameters:} For FAST-PPR, we set the constants $c = 350$ and $\beta = 1/6$ -- these are guided by the Chernoff bounds we use in the proof of Theorem \ref{thm:fastpprmain}. For vanilla FAST-PPR, we simply choose $\e_r = \sqrt{\delta}$.

\subsection{Distribution of PPR values}
\label{ssec:pprdistr}

For all our experiments, we use $\delta=\frac{4}{n}$. To understand the importance of this threshold, we study the distribution of PPR values in real networks. Using the Twitter graph as an example, we choose 10,000 random $(s,t)$ pairs and compute $\PR_s(t)$ using FAST-PPR to accuracy $\delta=\frac{n}{10}$. The complementary cumulative distribution function is shown on a log-log plot in Figure \ref{fig:ppr_ccdf}.  Notice that the plot is roughly linear, suggesting a power-law. Because of this skewed distribution, only 2.8\

\begin{figure}
\centering
\includegraphics[width=\columnwidth]{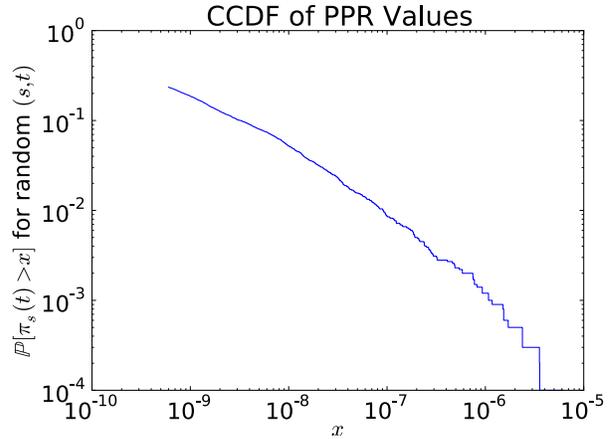}
\caption{Complementary cumulative distribution for 10,000 (s,t) pairs sampled uniformly at random on the Twitter graph.} 
\label{fig:ppr_ccdf}
\end{figure}

\subsection{Running Time Comparisons}
\label{ssec:timeexpts}

After loading the graph into memory, we sample $1000$ source/target pairs $(s,t)$ uniformly at random. For each, we measure the time required for answering PPR-estimation queries with threshold $\delta=4/n$, which, as we discuss above, is fairly significant because of the skew of the PPR distribution. To keep the experiment length less than 24 hours, for the Local-Update algorithm and Monte-Carlo algorithms we only use 20 and 5 pairs respectively.

\begin{figure*}[t]
\centering
\subfigure[Sampling targets uniformly]{
\label{fig:runtime_0_2_a}
\includegraphics[width=1\columnwidth]{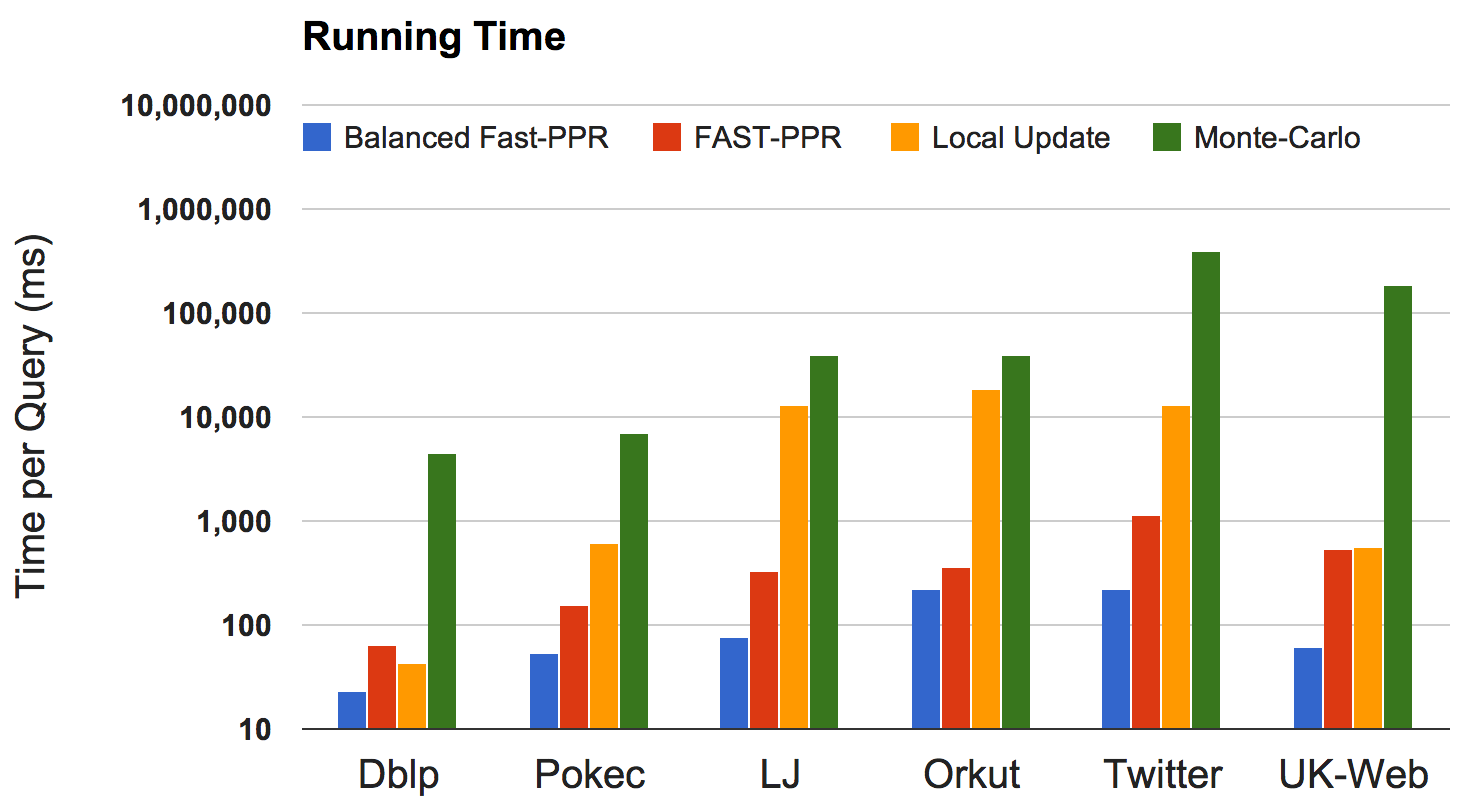}
}
\hfill
\subfigure[Sampling targets from PageRank distribution]{
\label{fig:runtime_0_2_b}
\includegraphics[width=1\columnwidth]{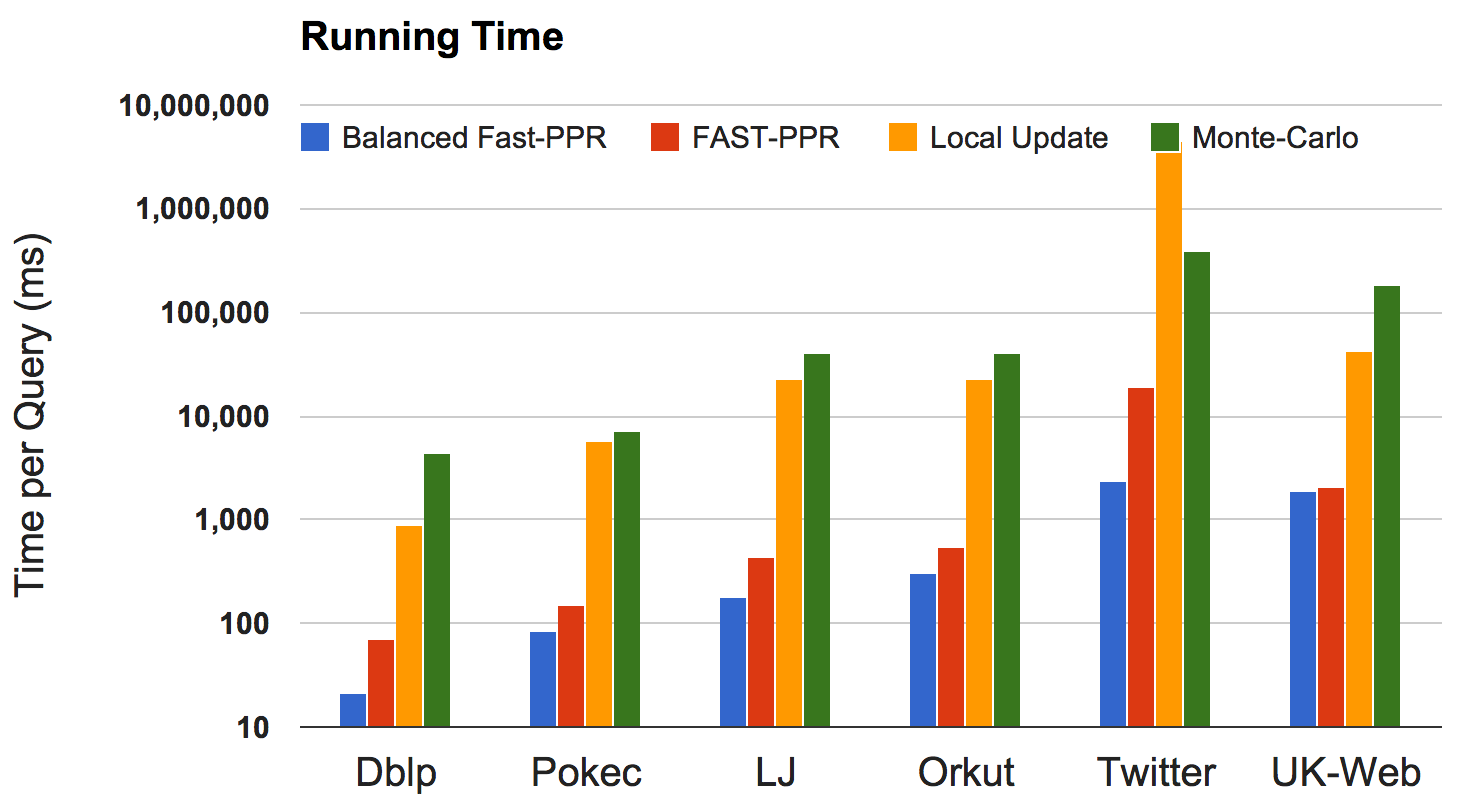}
}

\caption[Running-time Plots]{Average running-time (on log-scale) for different networks. We measure the time required for Significant-PPR queries $(s,t,\delta)$ with threshold $\delta=\frac{4}{n}$ for $1000$ $(s,t)$ pairs. For each pair, the start node is sampled uniformly, while the target node is sampled uniformly in Figure \ref{fig:runtime_0_2_a}, or from the global PageRank distribution in Figure \ref{fig:runtime_0_2_b}. In this plot we use teleport probability $\alpha=0.2$.}
\label{fig:runtime_0_2}
\end{figure*}

The running-time comparisons are shown in Figure \ref{fig:runtime_0_2} --
we compare Monte-Carlo, Local-Update, vanilla FAST-PPR, and BALANCED-FAST-PPR. We perform an analogous experiment where target nodes are
sampled according to their global PageRank value. This is a more
realistic model for queries in personalized search applications, with
searches biased towards more popular targets. The results, plotted in
Figure~\ref{fig:runtime_0_2_b}, show even better speedups for FAST-PPR. All in all, FAST-PPR is many orders of magnitude faster than the state of the art.

{\bf The effect of target global PageRank:} To quantify the speedup further, we sort the targets in the Twitter-2010 graph by their global PageRank, and choose the first target in each percentile.
We measure the running-time of the four algorithms (averaging over random source nodes), as shown in Figure \ref{fig:runtime_pagerank}. 
Note that FAST-PPR is much faster than previous methods for the targets with high PageRank. Note also that large PageRank targets account for most of the average running-time -- thus improving performance in these cases causes significant speedups in the average running time.
We also see that BALANCED-FAST-PPR has significant improvements over vanilla FAST-PPR, especially for lower PageRank targets.

\begin{figure}[!t]
\centering

\includegraphics[width=\columnwidth]{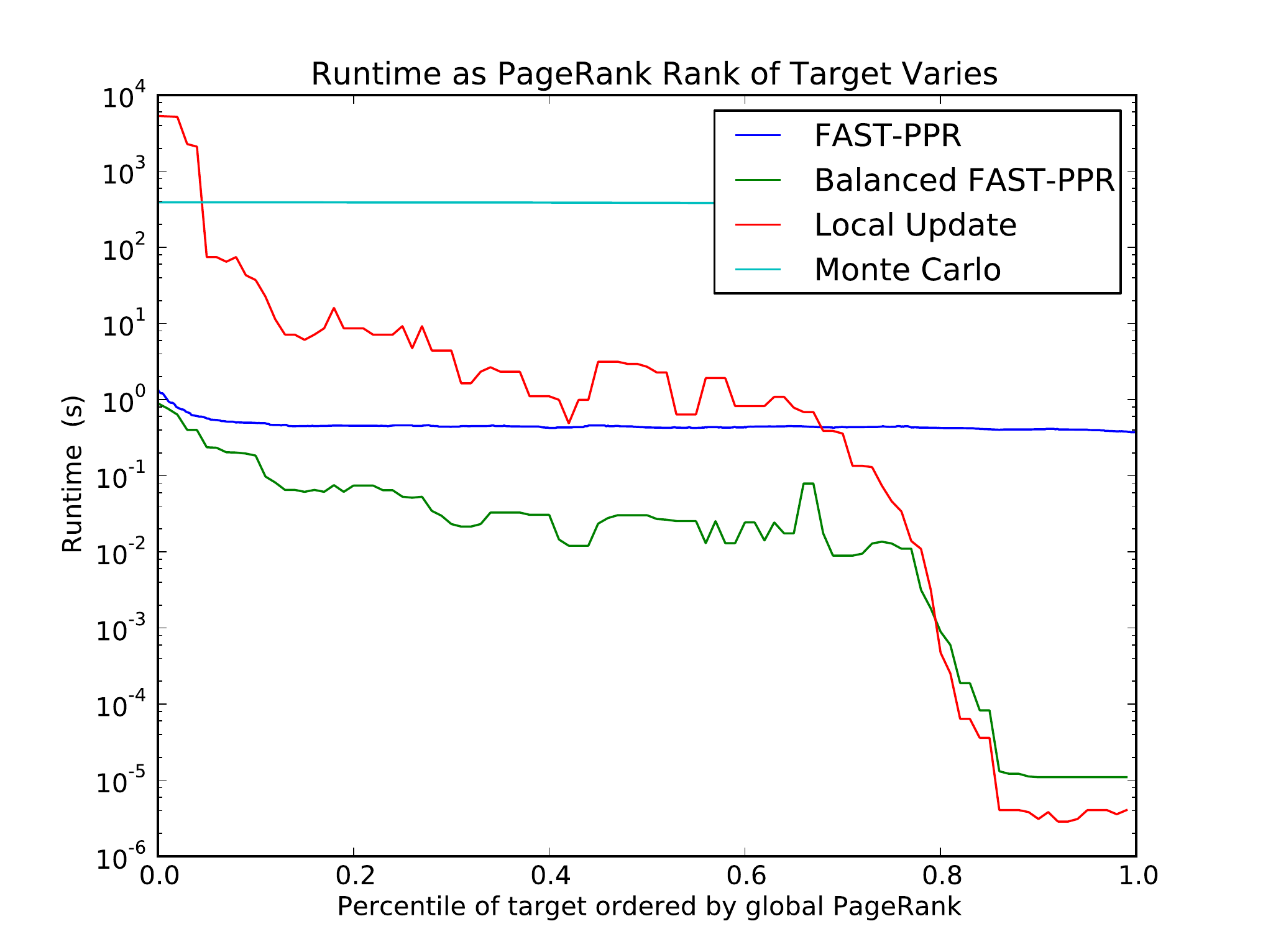}
\caption{Execution time vs. global PageRank of the target. Target nodes were sorted by global PageRank, then one was chosen from each percentile. We use $\alpha=0.2$, and $5$-point median smoothing.}
\label{fig:runtime_pagerank}
\end{figure}

To give a sense of the speedup achieved by FAST-PPR, consider the Twitter-2010 graph. Balanced FAST-PPR takes less than 3 seconds for Significant-PPR queries with targets sampled from global PageRank -- in contrast, Monte Carlo takes more than 6 minutes and Local Update takes more than an hour. In the worst-case, Monte Carlo takes 6 minutes and Local Update takes 6 hours, while Balanced FAST-PPR takes 40 seconds. Finally, the estimates from FAST-PPR are twice as accurate as those from Local Update, and 6 times more accurate than those from Monte Carlo.

\subsection{Measuring the Accuracy of FAST-PPR}
\label{ssec:exptacc}

We measure the empirical accuracy of FAST-PPR. For each graph, we sample $25$ targets uniformly at random, and compute their ground truth inverse-PPR vectors by running a power iteration up to an additive error of $\delta/100$ (as before, we use $\delta=4/n$). Since larger PPR values are easier to compute than smaller PPR values, we sample start nodes such that $\PR_s(t)$ is near the significance threshold $\delta$. In particular, for each of the 25 targets $t$, we sample 50 random nodes from the set $\{s: \delta/4 \leq \PR_s(t) \leq \delta\}$ and 50 random nodes from the set  $\{s: \delta \leq \PR_s(t) \leq 4 \delta\}$. 

We execute FAST-PPR for each of the 2500 $(s,t)$ pairs, and measure the empirical error -- the results are compiled in Table \ref{table:accuracy}. Notice that FAST-PPR has mean relative error less than 15\

\begin{table*}
\centering
\caption{Accuracy of FAST-PPR (with parameters as specified in Section \ref{ssec:exptsetup})}
\label{table:accuracy}
\begin{tabular}{|c|c|c|c|c|c|c|} 
\hline
{} & Dblp & Pokec & LJ & Orkut & Twitter & UK-Web\\ \hline
Threshold $\delta$ & 4.06e-06 & 2.45e-06 & 8.25e-07 & 1.30e-06 & 9.60e-08 & 3.78e-08\\ \hline
Average Additive Error & 5.8e-07 & 1.1e-07 & 7.8e-08 & 1.9e-07 & 2.7e-09 & 2.2e-09\\ \hline
Max Additive Error & 4.5e-06 & 1.3e-06 & 6.9e-07 & 1.9e-06 & 2.1e-08 & 1.8e-08\\ \hline
Average Relative Error: & 0.11 & 0.039 & 0.08 & 0.12 & 0.028 & 0.062\\ \hline
Max Relative Error & 0.41 & 0.22 & 0.47 & 0.65 & 0.23 & 0.26\\ \hline

\end{tabular}
\end{table*}

To make sure that FAST-PPR is not sacrificing accuracy for improved running-time, we also compute the relative error of  Local-Update and Monte-Carlo, using the same parameters as for our running-time experiments. For each of the 2500 $(s,t)$ pairs, we run Local-Update, and measure its relative error. For testing Monte-Carlo, we use our knowledge of the ground truth PPR, and the fact that each random-walk from $s$ terminates at $t$ with probability $p_s=\PR_{s}(t)$. This allows us to simulate Monte-Carlo by directly sampling from a Bernoulli variable with mean $\PR_{s}(t)$ -- this statistically identical to generating random-walks and testing over all pairs.  Note that actually simulating the walks would take more than 50 days of computation for 2500 pairs. The relative errors are shown in Figure \ref{fig:preview_b}.  Notice that FAST-PPR is more accurate than the state-of-the-art competition on all graphs.  This shows that our running time comparisons are using parameters settings that are fair.

\begin{figure*}[t]
\centering

\subfigure[FAST-PPR]{
\label{fig:accuracy_c}
\includegraphics[width=0.85\columnwidth]{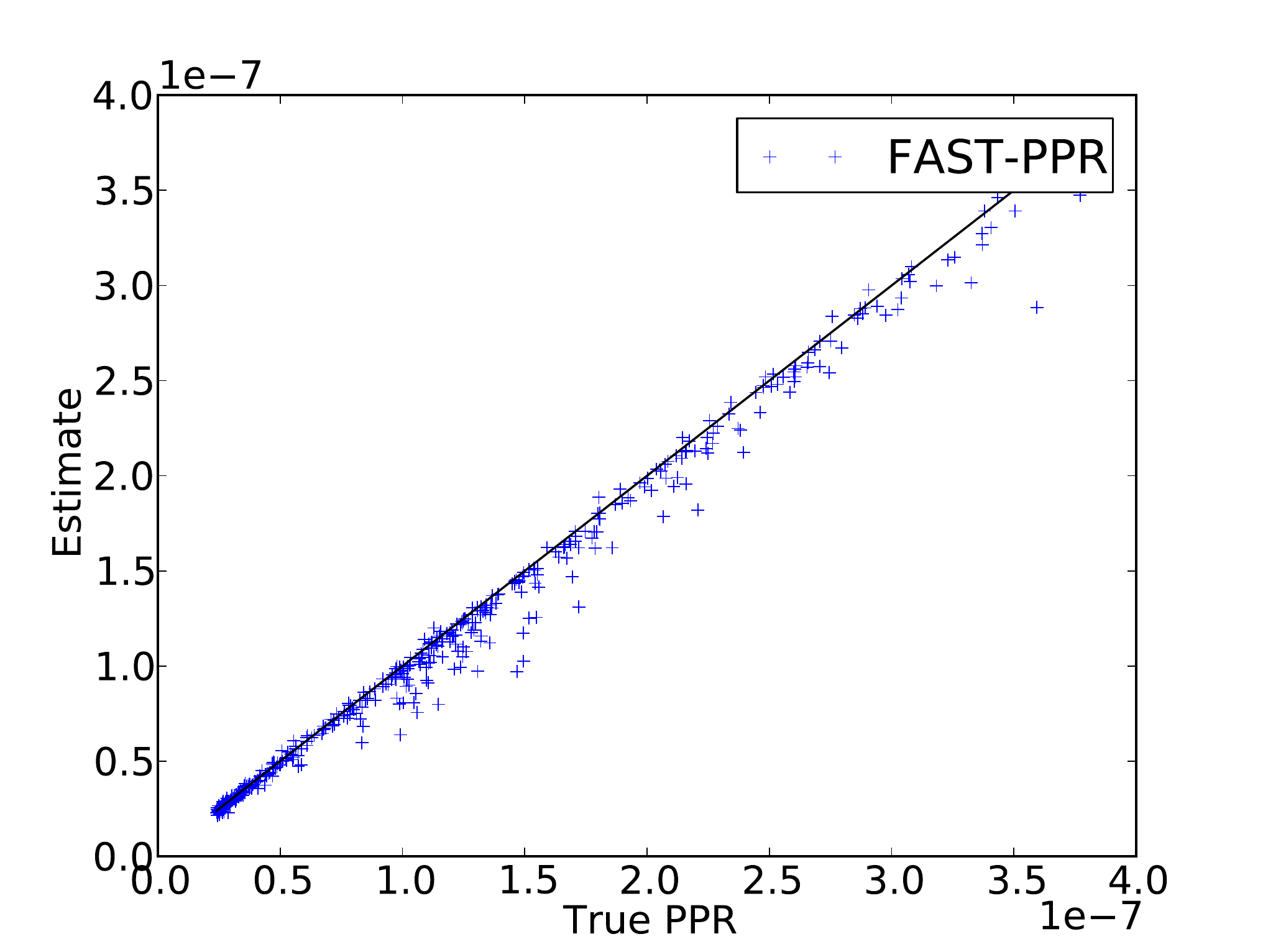}
}
\hspace{1cm}
\subfigure[FAST-PPR using target set instead of frontier]{
\label{fig:accuracy_d}
\includegraphics[width=0.85\columnwidth]{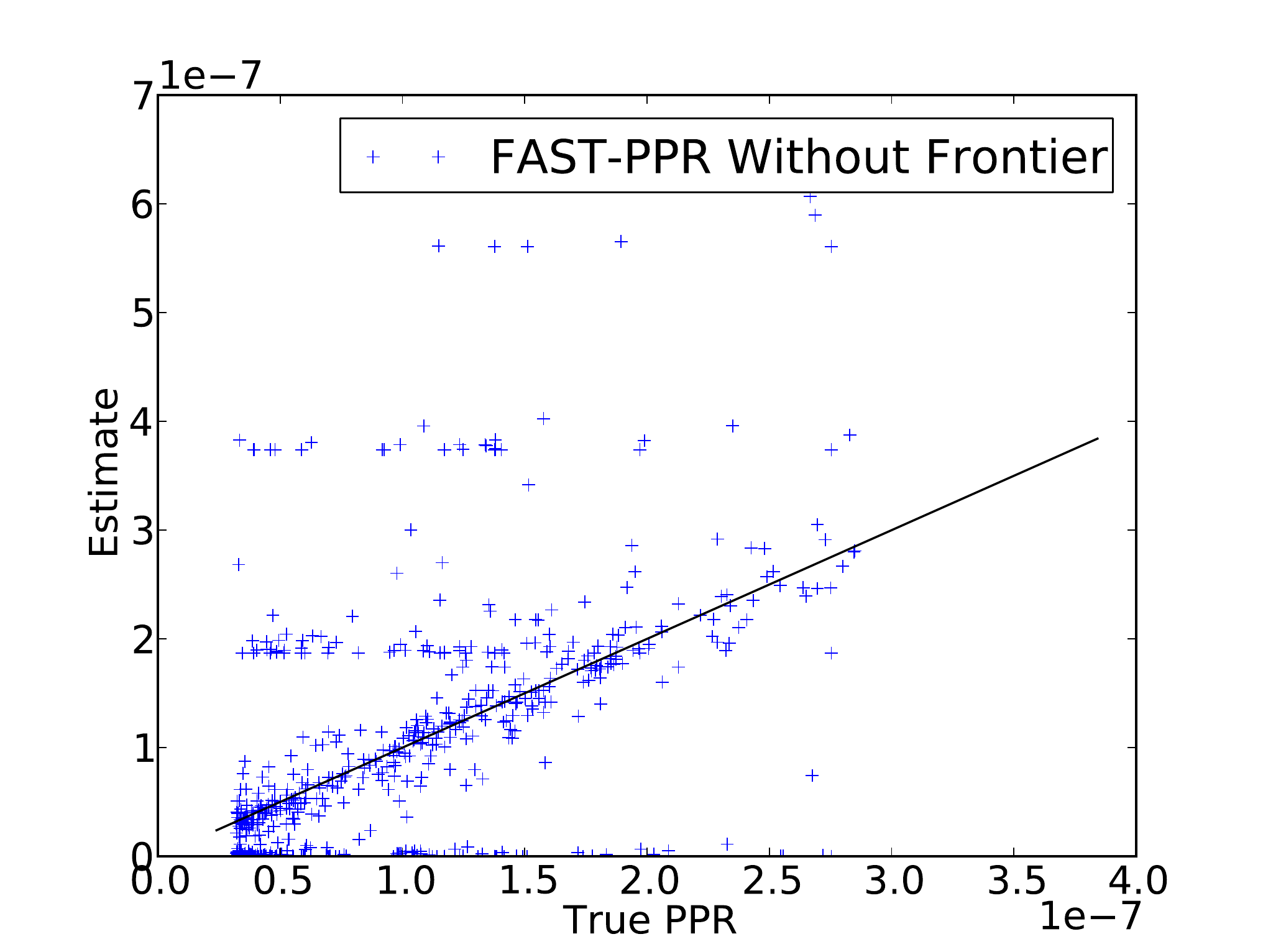}
}
\caption[Accuracy Plots]{
The importance of using the frontier. In each of these plots, a perfect algorithm would place all data points on the line $y=x$. Notice how using inverse-PPR estimates from the target set rather than the frontier results in significantly worse accuracy.}
\label{fig:accuracy}
\end{figure*}

\begin{figure}
\centering
\includegraphics[width=\columnwidth]{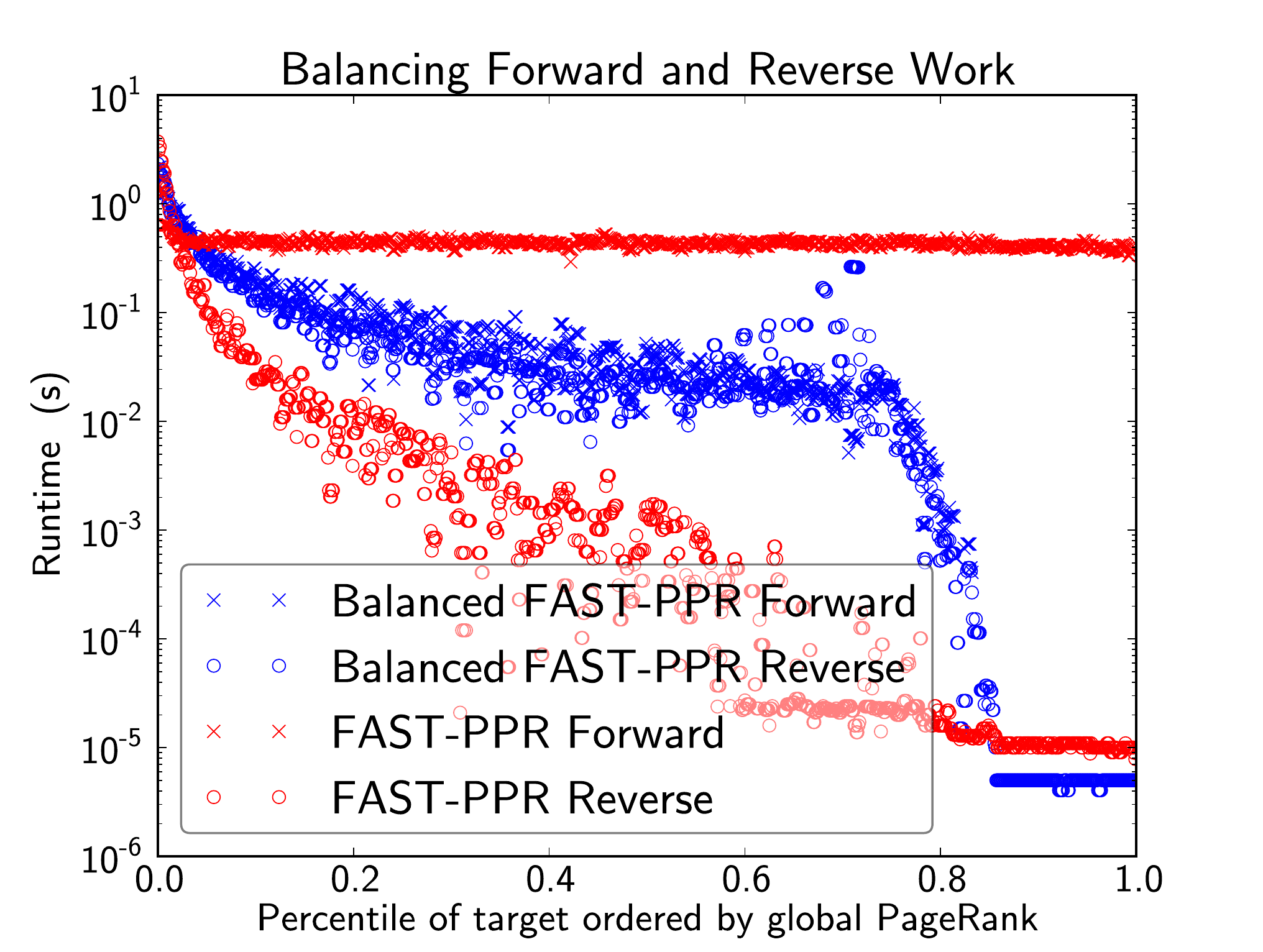}
\caption{Forward and reverse running-times (on log-scale) for FAST-PPR (in red) and BALANCED-FAST-PPR (in blue) as we vary the global PageRank of the target node on the x-axis. Data is for the Twitter-2010 graph and is smoothed using median-of-five smoothing. Notice how there is a significant gap between the forward and backward work in FAST-PPR, and that this gap is corrected by BALANCED-FAST-PPR.}
\label{fig:balance_pagerank}
\end{figure}

\subsection{Some Other Empirical Observations}
\label{ssec:balanceexpt}

\noindent\textbf{$\bullet$ Necessity of the Frontier:} Another question we study experimentally is whether we can modify FAST-PPR to compute the target set $T_t(\e_r)$ and then run Monte-Carlo walks until they hit the target set (rather than the frontier set $F_t(\e_r)$). This may appear natural, as the target set is also a blanket set, and we have good approximations for inverse-PPR values in the target set. Further, using only the target set would reduce the dependence of the running-time on $d$, and also reduce storage requirements for an oracle-based implementation.

It turns out however that \emph{using the frontier is critical} to get good accuracy. Intuitively, this is because nodes in the target set may have high inverse-PPR, which then increases the variance of our estimate. This increase in variance can be visually seen in a scatterplot of the true vs estimated value, as shown in Figure  \ref{fig:accuracy_d} -- note that the estimates generated using the frontier set are much more tightly clustered around the true PPR values, as compared to the estimates generated using the target set.

\noindent\textbf{$\bullet$ Balancing Forward and Reverse Work:} BALANCED-FAST-PPR, as described in Section \ref{ssec:rebalancing}, chooses reverse threshold $\epsilon_r$ dynamically for each target node.  Figure \ref{fig:runtime_0_2} shows that BALANCED-FAST-PPR improves the average running-time across all graphs.

In Figure \ref{fig:balance_pagerank}, we plot the forward and reverse running-times for FAST-PPR and Balanced FAST-PPR as a function of the target PageRank. Note that for high global-PageRank targets, FAST-PPR does too much reverse work, while for low global-PageRank targets, FAST-PPR does too much forward work -- this is corrected by Balanced FAST-PPR.

\section*{Acknowledgements}

Peter Lofgren is supported by a National Defense Science and
Engineering Graduate (NDSEG) Fellowship.

Ashish Goel and Siddhartha Banerjee were supported in part by the
DARPA XDATA program, by the DARPA GRAPHS program via grant FA9550-
12-1-0411 from the U.S. Air Force Office of Scientific Research
(AFOSR) and the Defense Advanced Research Projects Agency (DARPA), and
by NSF Award 0915040.

C. Seshadhri is at Sandia National Laboratories, which is a multi-program laboratory managed and operated by Sandia Corporation, a wholly owned subsidiary of Lockheed Martin Corporation, for the U.S. Department of Energy's National Nuclear Security Administration under contract DE-AC04-94AL85000.

\bibliographystyle{ieeetr}
\bibliography{PPR-refs}

\appendix

\section{FAST-PPR with Approximate Frontier Estimates}
\label{appsec:imperfect_alt}

We now extend to the case where we are given an estimate $\{\widehat{\PR}^{-1}_t(w)\}$ of the inverse-PPR vector of $v$, with maximum additive error $\e_{inv}$, i.e.:
$$\max_{w\in V}\left\{|\PR^{-1}_t(w)-\widehat{\PR}^{-1}_t(w)|\right\}<\e_{inv}.$$
We note here that the estimate we obtain from the APPROX-FRONTIER algorithm (Algorithm \ref{alg:invPPR}) has a \emph{one-sided} error; it always underestimates $\PR_t^{-1}$. However, we prove the result for the more general setting with two-sided error.

First, we have the following lemma:
\begin{lemma}
\label{lem:approxfrontier}
Given $\e_r$ and $\e_{inv}\leq \e_r\beta$ for some $\beta\in(0,1/2)$. Defining approximate target-set $\widehat{T}_t(\e_r)= \{w \in V : \widehat{\PR}^{-1}_t(w)>\e_r\}$, we have:	
\begin{align*}
	T_t((1+\beta)\e_r)&\subseteq\widehat{T}_t(\e_r)\subseteq T_t(\e_r(1-\beta))
\end{align*}
Moreover, let $c_{inv}=\left(\frac{\beta}{1-\beta}\right)<1$; then $\fall w\in\widehat{T}_t(\e_r)$:
\begin{align*}
|\PR^{-1}_t(w)-\widehat{\PR}^{-1}_t(w)|<c_{inv}\PRi_t(w).
\end{align*}
\end{lemma}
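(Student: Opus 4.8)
The plan is to derive everything from the single additive-error hypothesis $|\PR^{-1}_t(w)-\widehat{\PR}^{-1}_t(w)|<\e_{inv}$ together with the assumption $\e_{inv}\leq\beta\e_r$, and then push these inequalities through the definitions of $T_t(\cdot)$ and $\widehat{T}_t(\e_r)$. I would handle the two set inclusions by a direct membership argument. For the left inclusion $T_t((1+\beta)\e_r)\subseteq\widehat{T}_t(\e_r)$, I take any $w$ with $\PR^{-1}_t(w)>(1+\beta)\e_r$ and lower-bound its estimate: $\widehat{\PR}^{-1}_t(w)>\PR^{-1}_t(w)-\e_{inv}>(1+\beta)\e_r-\beta\e_r=\e_r$, so $w\in\widehat{T}_t(\e_r)$. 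For the right inclusion $\widehat{T}_t(\e_r)\subseteq T_t((1-\beta)\e_r)$, I argue symmetrically: if $\widehat{\PR}^{-1}_t(w)>\e_r$ then $\PR^{-1}_t(w)>\widehat{\PR}^{-1}_t(w)-\e_{inv}>\e_r-\beta\e_r=(1-\beta)\e_r$.

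For the multiplicative bound, the key observation is already delivered by the right inclusion: every $w\in\widehat{T}_t(\e_r)$ satisfies the true lower bound $\PR^{-1}_t(w)>(1-\beta)\e_r$. This lower bound is precisely what converts the fixed additive error into a relative one, since $|\PR^{-1}_t(w)-\widehat{\PR}^{-1}_t(w)|<\e_{inv}\leq\beta\e_r=\frac{\beta}{1-\beta}\,(1-\beta)\e_r<\frac{\beta}{1-\beta}\,\PR^{-1}_t(w)=c_{inv}\PR^{-1}_t(w)$. The condition $\beta\in(0,1/2)$ then guarantees $c_{inv}<1$, matching the claimed constant.

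The main (and essentially only) subtlety to get right is recognizing that it is membership in the \emph{approximate} target set $\widehat{T}_t(\e_r)$ — rather than the true one — that must be used to guarantee the true value is bounded away from zero; charging the uniform additive error $\e_{inv}$ against this lower bound $(1-\beta)\e_r$ is what yields the relative rate $\beta/(1-\beta)$. Everything else is routine inequality chasing, and the strict-versus-weak inequalities propagate cleanly because the error bound is strict.
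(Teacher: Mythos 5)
Your proposal is correct and follows essentially the same route as the paper: both inclusions come from pushing the additive error bound $\e_{inv}\leq\beta\e_r$ through the set definitions, and the multiplicative bound is obtained exactly as in the paper by using the right inclusion to lower-bound $\PR^{-1}_t(w)$ by $(1-\beta)\e_r$ and charging the additive error against it. The paper's proof is merely a terser version of the same argument, so your writeup just supplies the routine details the authors left implicit.
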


\begin{proof}
	The first claim follows immediately from our definition of $T_t$ and the additive error guarantee. Next, from the fact that $\widehat{T}_t(\e_r)\subseteq T_t(\e_r(1-\beta))$. we have that $\fall w\in\widehat{T}_t(\e_r), \PRi_t(w)\geq(1-\beta)\e_r$. Combining this with the additive guarantee, we get the above estimate. 
\end{proof}

Now for FAST-PPR estimation with approximate oracles, the main problem is that if $\e_{inv}=\beta\e_r$, then the additive error guarantee is not useful when applied to nodes in the frontier: for example, we could have $\PR^{-1}_t(w) < \e_{inv}$ for all $w$ in the frontier $\widehat{F}_t(\e_r)$. On the other hand, if we allow the walks to reach the target set, then the variance of our estimate may be too high, since $\PR^{-1}_t(w)$ could be $\Omega(1)$ for nodes in the target set. To see this visually, compare figures \ref{fig:accuracy_c} and \ref{fig:accuracy_d} -- in the first we use the estimate for nodes in the frontier, ending up with a negative bias; in the second, we allow walks to hit the target set, leading to a high variance. 

However, Lemma \ref{lem:approxfrontier} shows that we have a multiplicative guarantee for nodes in $\widehat{T}_t(\e_r)$. We now circumvent the above problems by \emph{re-sampling walks to ensure they do not enter the target set}. We first need to introduce some additional notation. Given a target set $T$, for any node $u\notin T$, we can partition its neighboring nodes $\N^{out}(u)$ into two sets -- $\N_T(u)\triangleq\N^{out}(u)\cap T$ and $\N_{\overline{T}}(u)=\N^{out}(u)\setminus T$. We also define a \emph{target-avoiding random-walk} $RW_T(u)=\{u,V_1,V_2,V_3,\ldots\}$ to be one which starts at $u$, and at each node $w$ goes to a uniform random node in $\N_{\overline{T}}(w)$. 

Now given target-set $T$, suppose we define the \emph{score} of any node $u\notin T$ as $S_T(u)=\frac{\sum_{z\in\N_T(u)}\PR_z(t)}{d^{out}(u)}$. We now have the following lemma:
\begin{lemma}
\label{lem:altestimate}
Let $RW_T(s)=\{u,V_1,V_2,V_3,\ldots\}$ be a target-avoiding random-walk, and $L\sim Geom(\alpha)$. Further, for any node $u\notin T$, let $p_{\overline{T}}(u)=\frac{\N_{\overline{T}}(u)}{d^{out}(u)}$, i.e., the fraction of neighboring nodes of $u$ not in target set $T$. Then:
\begin{align}
\label{eq:altestimate}
\PR_s(t)&=\EE_{RW_T(s),L}\left[\sum_{i=1}^{L}\left(\prod_{j=0}^{i-1}p_{\overline{T}}(V_j)\right)S_T(V_i)\right].
\end{align}
\end{lemma}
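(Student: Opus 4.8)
The plan is to derive the identity by instantiating the blanket-set decomposition of Eqn.~\ref{eq:PPRblanket} with $B = T$, and then re-expressing the resulting first-hitting probabilities as an expectation over the target-avoiding walk via a change of measure. First I would check that $T$ is a legitimate blanket set: since $\PR_t(t) \geq \alpha$ we have $t \in T$, and as $s \notin T$ by assumption, every path from $s$ to $t$ must first cross into $T$ (exactly as argued in Proposition~\ref{prop:frontier}). Writing $H_T$ for the first node of a standard walk $RW(s,L)$ that lies in $T$, Eqn.~\ref{eq:PPRblanket} then gives $\PR_s(t) = \sum_{w \in T} \PP[H_T = w]\,\PR_w(t)$, and everything reduces to understanding $\PP[H_T=w]$.

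Next I would expand each first-hitting probability over paths. A standard walk first enters $T$ at $w$ on step $k$ precisely when it follows a path $s = u_0 \to u_1 \to \cdots \to u_{k-1} \to w$ with $u_0,\dots,u_{k-1} \in \overline{T}$ and $w \in \N_T(u_{k-1})$, and survives for $k$ steps; the survival contributes $\PP[L \geq k] = (1-\alpha)^k$ and the transitions contribute $\prod_{j=0}^{k-1} 1/d^{out}(u_j)$. Summing the factor $\PR_w(t)$ over $w \in \N_T(u_{k-1})$ collapses the final step into exactly the score $S_T(u_{k-1}) = \big(\sum_{z \in \N_T(u_{k-1})} \PR_z(t)\big)/d^{out}(u_{k-1})$. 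Writing $i = k-1$, this leaves a weighted sum over $\overline{T}$-paths $s = u_0 \to \cdots \to u_{i}$ of terms $\big(\prod_{j=0}^{i-1} 1/d^{out}(u_j)\big) S_T(u_i)$, each carrying a survival factor that accounts both for reaching $u_i$ and for the extra step that crosses into $T$.

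The key manoeuvre is then a change of measure. The target-avoiding walk $RW_T(s)$ traverses the same $\overline{T}$-path with probability $\prod_{j=0}^{i-1} 1/|\N_{\overline{T}}(u_j)|$, so multiplying by the importance-sampling correction $\prod_{j=0}^{i-1} p_{\overline{T}}(u_j) = \prod_{j=0}^{i-1} |\N_{\overline{T}}(u_j)|/d^{out}(u_j)$ converts the target-avoiding path measure back into the standard transition weights $\prod_{j=0}^{i-1} 1/d^{out}(u_j)$. Hence each path sum equals $\EE_{RW_T(s)}\big[\prod_{j=0}^{i-1} p_{\overline{T}}(V_j)\, S_T(V_i)\big]$, and since $L$ is independent of the walk, the surviving $(1-\alpha)$-powers reassemble as probabilities $\PP[L \geq i+1]$. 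Summing over $i$ then recombines, via independence, into the single expectation $\EE_{RW_T(s),L}\big[\sum_i \prod_{j=0}^{i-1} p_{\overline{T}}(V_j)\, S_T(V_i)\big]$, with $i$ ranging over the steps the walk survives before crossing into $T$. Because every summand is nonnegative, Tonelli's theorem justifies each interchange of summation with expectation.

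The step I expect to be the main obstacle is the simultaneous bookkeeping of the geometric stopping time and the change of measure: one must carefully line up the survival weight for reaching $V_i$ and then taking one further step into $T$ against the correct range of the summation index, so that no off-by-one error creeps in between the step that lands on $V_i$ and the step that actually enters $T$. A secondary technical point is the degenerate case where some $u \in \overline{T}$ has $\N_{\overline{T}}(u) = \emptyset$; there $p_{\overline{T}}(u) = 0$ automatically truncates all longer paths, so the target-avoiding walk becoming \emph{stuck} contributes no error and requires no separate treatment.
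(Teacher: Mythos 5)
Your argument is sound and takes a genuinely different route from the paper's. The paper never invokes the blanket-set identity of Eqn.~\ref{eq:PPRblanket} or a change of measure: it instead derives the one-step decomposition $\PR_u(t) = (1-\alpha)\left(S_T(u) + p_{\overline{T}}(u)R_T(u)\right)$ for $u \notin T$, where $R_T(u)$ is the average of $\PR_z(t)$ over $z\in\N_{\overline{T}}(u)$ (i.e.\ the expected PPR value at a uniform non-target out-neighbor), and then unrolls this recurrence along the target-avoiding walk, converting the accumulated $(1-\alpha)$ factors into the geometric length $L$ at the very end. Your path-expansion-plus-importance-sampling derivation produces the same reweighting $\prod_j p_{\overline{T}}(V_j)$ for structurally the same reason -- the ratio of the uniform-over-$\N_{\overline{T}}(u)$ transition probability to the uniform-over-all-out-neighbors one -- but it makes the probabilistic content explicit, and your treatment of the stuck-walk case ($\N_{\overline{T}}(u)=\emptyset$) is a genuine point the paper's recursion handles only silently, via the factor $p_{\overline{T}}(u)=0$ annihilating the residual term. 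The paper's proof is shorter; yours is more transparent about why the estimator is unbiased.

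One point deserves emphasis: the off-by-one you flagged as the main obstacle is real, and it sits in the printed statement rather than in your argument. Careful bookkeeping, exactly as in your proof, attaches survival weight $(1-\alpha)^{i+1} = \PP[L \geq i+1]$ to the term $\left(\prod_{j=0}^{i-1}p_{\overline{T}}(V_j)\right)S_T(V_i)$ with $V_0 = s$, which reassembles as $\PR_s(t)=\EE\left[\sum_{i=0}^{L-1}\left(\prod_{j=0}^{i-1}p_{\overline{T}}(V_j)\right)S_T(V_i)\right]$, including an $i=0$ term $(1-\alpha)S_T(s)$; the paper's own unrolling in fact yields these same $(1-\alpha)^{i+1}$ weights even though its displayed formula reads $\sum_{i=1}^{L}$ with weights $(1-\alpha)^{i}$. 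Under the convention $V_0=s$ the printed right-hand side fails in corner cases: if every out-neighbor of $s$ lies in $T$, then $p_{\overline{T}}(s)=0$ makes every printed summand vanish, while $\PR_s(t)=(1-\alpha)S_T(s)$ is typically positive. (The printed form is recovered only under the shifted convention that the walk is indexed $V_1=s$ with $p_{\overline{T}}(V_0):=1$.) So retain your index range $i=0$ to $L-1$; it is the internally consistent version of the identity.
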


\begin{proof}
Given set $T$, and any node $u\notin T$, we can write:
\begin{align*}
\PR_u(t)&=\frac{1-\alpha}{d^{out}(u)}\left(\sum_{z\in\N_T(u)}\PR_z(t) + \sum_{z\in\N_{\overline{T}}(u)} \PR_z(t)\right).
\end{align*}
Recall we define the score of node $u$ given $T$ as $S_T(u)=\frac{\sum_{z\in\N_T(u)}\PR_z(t)}{d^{out}(u)}$, and the fraction of neighboring nodes of $u$ not in $T$ as $p_{\overline{T}}(u)=\frac{\N_{\overline{T}}(u)}{d^{out}(u)}$. Suppose we now define the \emph{residue} of node $u$ as $R_T(u)=\frac{\sum_{z\in\N_{\overline{T}}(u)}\PR_z(t)}{|\N_{\overline{T}}(u)|}$. Then we can rewrite the above formula as:
\begin{align*}
\PR_u(t)&=(1-\alpha)\left(S_T(u)+p_{\overline{T}}(u)R_T(u)\right).
\end{align*}
Further, from the above definition of the residue, observe that $R_T(u)$ is the \emph{expected value of $\PR_W(t)$ for a uniformly picked node $W\in\N_{\overline{T}}(u)$}, i.e., a uniformly chosen neighbor of $u$ which is not in $T$. Recall further that we define a target-avoiding random-walk $RW_T(s)=\{s,V_1,V_2,V_3,\ldots\}$ as one which at each node picks a uniform neighbor not in $T$. Thus, by iteratively expanding the residual, we have:
\begin{align*}
\PR_s(t)&=\EE_{RW_T(s)}\left[\sum_{i=1}^{\infty}(1-\alpha)^i\left(\prod_{j=0}^{i-1}p_{\overline{T}}(V_j)\right)S_T(V_i)\right]\\
&=\EE_{RW_T(s),L}\left[\sum_{i=1}^{L}\left(\prod_{j=0}^{i-1}p_{\overline{T}}(V_j)\right)S_T(V_i)\right].
\end{align*}
\end{proof}

\begin{algorithm}[!t]
\caption{Theoretical-FAST-PPR$(s,t,\delta)$}
\label{alg:FASTPPR_approx}
\begin{algorithmic}[1] 
\REQUIRE start node $s$, target node $t$, threshold $\delta$, reverse threshold $\e_r$, relative error $c$, failure probability $p_{fail}$.
\STATE Define $\e_f = \frac{\delta}{\e_r}$, $\beta=\frac{c}{3+c}$, $p_{min}=\frac{c}{3(1+\beta)}$
\STATE Call FRONTIER$(t,\e_r)$ (with $\beta$ as above) to obtain sets $\widehat{T}_t(\e_r), \widehat{F}_t(\e_r)$, inverse-PPR values $(\widehat{\PR}^{-1}_t(w))_{w\in T_t(\e_r)}$ and target-set entering probabilities $p_{\overline{T}}(u)\fall u\in \widehat{F}_t(\e_r)$.
\STATE Set $l_{max}=\log_{1-\alpha}\left(\frac{c\delta}{3}\right)$ and $n_f=\frac{45l_{max}}{c^2\e_f}\log\left(\frac{2}{p_{fail}}\right)$ 
\FOR{index $i\in [n_f]$}
\STATE Generate $L_i\sim Geom(\alpha)$
\STATE Setting $T=\widehat{T}_t(\e_r)$, generate target-avoiding random-walk $RW_T^i(s)$ via rejection-sampling of neighboring nodes.
\STATE Stop the walk $RW_T^i(s)$ when it encounters any node $u$ with $p_{\overline{T}}(u)<p_{min}$, OR,  when the number of steps is equal to $\min\{L_i,l_{max}\}$. 
\STATE Compute walk-score $S_i$ as the weighted score of nodes on the walk, according to equation \ref{eq:altestimate}.
\ENDFOR 
\RETURN $\frac{1}{n_f}\sum_{i=1}^{n_f}S_i$
\end{algorithmic}
\end{algorithm}

We can now use the above lemma to get a modified FAST-PPR algorithm, as follows: First, given target $t$, we find an approximate target-set $T=\widehat{T}(\e_r)$. Next, from source $s$, we generate a target-avoiding random walk, and calculate the weighted sum of scores of nodes. The walk collects the entire score the first time it hits a node in the frontier; subsequently, for each node in the frontier that it visits, it accumulates a smaller score due to the factor $p_{\overline{T}}$. Note that $p_{\overline{T}}(u)$ is also the probability that a random neighbor of $u$ is not in the target set. Thus, at any node $u$, we need to sample on average $1/p_{\overline{T}}(u)$ neighbors until we find one which is not in $T$. This gives us an efficient way to generate a target-avoiding random-walk using rejection sampling -- we stop the random-walk any time the current node $u$ has $p_{\overline{T}}(u)<p_{min}$ for some chosen constant $p_{min}$, else we sample neighbors of $u$ uniformly at random till we find one not in $T$. The complete algorithm is given in Algorithm \ref{alg:FASTPPR_approx}. 

Before stating our main results, we briefly summarize the relevant parameters. Theoretical-FAST-PPR takes as input a pair of node $(s,t)$, a threshold $\delta$, desired relative error $c$ and desired probability of error $p_{fail}$. In addition, it involves six internal parameters -- $\e_r,\e_f,\beta,p_{min},l_{max}$ and $n_f$ -- which are set based on the input parameters $(\delta,p_{fail},c)$ (as specified in Algorithm \ref{alg:FASTPPR_approx}). $\e_r$ and $\beta$ determine the accuracy of the PPR estimates from FRONTIER; $n_f$ is the number of random-walks; $l_{max}$ and $p_{min}$ help control the target-avoiding walks. 

We now have the following guarantee of correctness of Algorithm \ref{alg:FASTPPR_approx}. For ease of exposition, we state the result in terms of relative error of the estimate -- however it can be adapted to a classification error estimate in a straightforward manner.
\begin{theorem}
\label{thm:fastpprapprox}
Given nodes $(s,t)$ such that $\PR_s(t)>\delta$, desired failure probability $p_{fail}$ and desired tolerance $c<1$. Suppose we choose parameters as specified in Algorithm \ref{alg:FASTPPR_approx}. Then with probability at least $1-p_{fail}$, the estimate returned by Theoretical-FAST-PPR satisfies:
\begin{align*}
\left|\PR_s(t)-\frac{1}{n_f}\sum_{i=1}^{n_f}S_i\right|<c\PR_s(t).
\end{align*}
\end{theorem}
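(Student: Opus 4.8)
The plan is to read the returned quantity $\frac{1}{n_f}\sum_i S_i$ as an empirical average of $n_f$ i.i.d.\ single-walk scores whose common expectation is \emph{close} to $\PR_s(t)$, and then to establish concentration around that expectation. By Lemma~\ref{lem:altestimate}, the idealized score (untruncated walk, exact inverse-PPR values) is an \emph{unbiased} estimator of $\PR_s(t)$. Theoretical-FAST-PPR departs from this ideal in exactly three ways: (i) it forms $S_T$ from the FRONTIER estimates $\widehat{\PR}^{-1}_t$ rather than from exact values, (ii) it caps the walk length at $l_{max}$, and (iii) it halts upon reaching a node with $p_{\overline{T}}<p_{min}$. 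The parameters $\beta,l_{max},p_{min},n_f$ are calibrated so that each of these perturbations, together with the statistical error of averaging, is a controlled fraction (on the order of $c/3$) of $\PR_s(t)$; a final triangle inequality then yields the claimed bound $<c\PR_s(t)$. I would emphasize up front that, as in the earlier theorems, the constants are chosen with slack and are not optimized.

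First I would bound the bias of the single-walk expectation $\mu:=\EE[S_i]$. For the score-estimation error, Lemma~\ref{lem:approxfrontier} gives that every target-set value is accurate within the multiplicative factor $c_{inv}=\beta/(1-\beta)$, which equals $c/3$ for the choice $\beta=c/(3+c)$; since $S_T(u)$ is a nonnegative average of such values, $\widehat{S}_T(u)\in(1\pm c/3)S_T(u)$, so this source contributes a multiplicative $c/3$. For the length cap, a term survives to step $i$ only if the geometric walk has not yet teleported, which contributes the factor $(1-\alpha)^i$; hence the choice $l_{max}=\log_{1-\alpha}(c\delta/3)$, so that $(1-\alpha)^{l_{max}}=c\delta/3$, makes the dropped tail small relative to $\PR_s(t)>\delta$. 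For the low-probability halt, the continuation past a node $u$ with $p_{\overline{T}}(u)<p_{min}$ is down-weighted by the factor $p_{\overline{T}}(u)<p_{min}$, and the choice $p_{min}=c/(3(1+\beta))$ keeps this contribution within budget.

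Next comes the concentration step, which hinges on bounding the \emph{range} of a single score $S_i$, and here the \emph{target-avoiding} property is essential. Every node $V_i$ visited by the walk lies outside $\widehat{T}_t(\e_r)$, so $\widehat{\PR}^{-1}_t(V_i)\le\e_r$, and by the additive guarantee of Theorem~\ref{thm:front} we get $\PR_{V_i}(t)\le(1+\beta)\e_r$. Combining this with the one-step identity $\PR_{V_i}(t)=(1-\alpha)(S_T(V_i)+p_{\overline{T}}(V_i)R_T(V_i))\ge(1-\alpha)S_T(V_i)$ from the proof of Lemma~\ref{lem:altestimate} yields the crucial per-term bound $S_T(V_i)\le(1+\beta)\e_r/(1-\alpha)=O(\e_r)$. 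As $S_i$ is a sum of at most $l_{max}$ nonnegative terms, each a product of weights in $[0,1]$ and a score of size $O(\e_r)$, we obtain the range bound $S_i\le M:=O(l_{max}\e_r)$. Rescaling to $Y_i=S_i/M\in[0,1]$ and applying the same relative Chernoff bound used in Theorem~\ref{thm:fastpprmain} (Theorem~1.1 of~\cite{DuPa09}) to $\sum_i Y_i$ (whose mean is $n_f\mu/M$), the event $|\frac{1}{n_f}\sum_i S_i-\mu|>(c/3)\PR_s(t)$ has probability at most $\exp(-\Omega(n_f\mu c^2/M))$. Since $\mu>\delta$ and $M=O(l_{max}\e_r)$, the choice $n_f=\Theta(\frac{l_{max}}{c^2\e_f}\log(1/p_{fail}))$ with $\e_f=\delta/\e_r$ drives this below $p_{fail}$; note that $\e_f$ is precisely the lower bound on the frontier-hitting probability from Proposition~\ref{prop:frontier}, which is why $1/\e_f$ walks (inflated by $l_{max}$ for the accumulation and by $c^{-2}\log(1/p_{fail})$ for accuracy and confidence) are what is needed.

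I expect the main obstacle to be the bias analysis of the two truncations. The length cap and the low-probability halt each delete a genuine, if small, portion of the estimator's mass, and one must argue carefully — using the geometric survival factor $(1-\alpha)^i$ together with the per-term $O(\e_r)$ bound for the tail, and the $p_{\overline{T}}<p_{min}$ down-weighting, respectively — that the deleted mass is a controlled fraction of $\PR_s(t)$ rather than an uncontrolled absolute quantity. Coupled with this is the per-term $O(\e_r)$ bound itself, which is what converts the $l_{max}$-fold accumulation into the stated sample size; without the target-avoiding guarantee the scores $S_T(V_i)$ could be $\Omega(1)$ and the variance would be far too large. Once the total bias is shown to be a controlled $O(c)$ fraction and the statistical deviation at most $(c/3)\PR_s(t)$, the triangle inequality completes the proof.
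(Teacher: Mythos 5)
Your proposal follows essentially the same route as the paper's proof: the identical triangle-inequality decomposition into three bias sources each budgeted at $c\,\PR_s(t)/3$ (the multiplicative score error $c_{inv}=\beta/(1-\beta)=c/3$ from Lemma~\ref{lem:approxfrontier}, the $l_{max}$ truncation with $(1-\alpha)^{l_{max}}=c\delta/3$, and the $p_{min}$ halt), followed by rescaling the per-walk scores to $[0,1]$ via the range bound $O(l_{max}\e_r)$ and applying the same Chernoff bound to reach $n_f=\Theta\left(\frac{l_{max}}{c^2\e_f}\log(1/p_{fail})\right)$. The only (minor) divergence is that you justify the per-term bound $S_T(V_i)=O(\e_r)$ carefully via the one-step identity and the fact that visited nodes lie outside $\widehat{T}_t(\e_r)$, where the paper simply asserts $\widehat{S}_T(u)\leq\e_r$ -- your version is, if anything, the more rigorous rendering of the same step.
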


Moreover, we also have the following estimate for the running-time:
\begin{theorem}
\label{thm:perffastppr_approx}
Given threshold $\delta$, teleport probability $\alpha$, desired relative-error $c$ and desired failure probability $p_{f}$, the Theoretical-FAST-PPR algorithm with parameters chosen as in Algorithm \ref{alg:FASTPPR_approx}, requires:
\begin{align*}
\bullet\mbox{ Average reverse-time}&=O\left(\frac{d}{\alpha\beta\e_r}\right)=O\left(\frac{d}{c\alpha\e_r}\right)\\
\bullet\mbox{ Average forward-time}&\leq O\left(\frac{n_f}{\alpha p_{min}}\right)\\
&=O\left(\frac{\log(1/p_{f})\log(1/\delta)}{c^3\alpha\e_f\log(1/(1-\alpha))}\right),
\end{align*}
where $d=\frac{m}{n}$ is the average in-degree of the network.
\end{theorem}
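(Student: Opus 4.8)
The plan is to bound the reverse-time and forward-time separately, exploiting the fact that Theoretical-FAST-PPR cleanly splits into one backward FRONTIER computation followed by $n_f$ independent forward target-avoiding walks. The reverse-time is handled by a direct appeal to Theorem~\ref{thm:front}, while the forward-time requires a three-way decomposition of the per-walk cost.

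For the reverse-time, the cost is exactly the running-time of the single call to FRONTIER. This call is made with additive accuracy $\e_{inv}=\beta\e_r$, so by Theorem~\ref{thm:front} the average running-time (over the choice of $t$) is $O\!\left(\frac{d}{\alpha\beta\e_r}\right)$. It then remains only to substitute $\beta=\frac{c}{3+c}$: since $c<1$ we have $\frac{1}{\beta}=\frac{3+c}{c}=\frac{3}{c}+1=O(1/c)$, which converts $O\!\left(\frac{d}{\alpha\beta\e_r}\right)$ into the claimed $O\!\left(\frac{d}{c\alpha\e_r}\right)$.

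For the forward-time, I would write the total cost as (number of walks) $\times$ (expected number of steps per walk) $\times$ (expected rejection-sampling cost per step). There are $n_f$ walks. Each walk takes $\min\{L_i,l_{max}\}\leq L_i$ steps with $L_i\sim Geom(\alpha)$, so the expected number of steps is at most $\EE[L_i]=\frac{1-\alpha}{\alpha}\leq\frac{1}{\alpha}$. For the per-step cost, note that the walk continues from a node $u$ only when $p_{\overline{T}}(u)\geq p_{min}$, and the next step is produced by sampling uniform neighbors of $u$ until one lands outside $T$; since a uniform neighbor lands in $\overline{T}$ with probability exactly $p_{\overline{T}}(u)\geq p_{min}$, the number of samples is geometric with mean $1/p_{\overline{T}}(u)\leq 1/p_{min}$. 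Combining these three factors (formally via Wald's identity, or linearity of expectation over the capped-geometric walk length) yields an expected forward-time of $O\!\left(\frac{n_f}{\alpha p_{min}}\right)$. Substituting $p_{min}=\frac{c}{3(1+\beta)}$ with $1+\beta\leq\frac{3}{2}$, and $n_f=\frac{45 l_{max}}{c^2\e_f}\log(2/p_{fail})$, gives $O\!\left(\frac{l_{max}\log(1/p_{fail})}{c^3\alpha\e_f}\right)$; finally, $l_{max}=\log_{1-\alpha}(c\delta/3)=\frac{\ln(3/(c\delta))}{\ln(1/(1-\alpha))}=O\!\left(\frac{\log(1/\delta)}{\log(1/(1-\alpha))}\right)$, absorbing the lower-order $\log(1/c)$ into $\log(1/\delta)$, which reproduces the stated bound.

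I expect the main obstacle to be the per-step rejection-sampling analysis: one must argue carefully that the $p_{min}$ stopping rule forces every continued step to succeed with probability at least $p_{min}$, so that the geometric sampling cost is uniformly $O(1/p_{min})$, and then combine this random per-step cost with the random (capped-geometric) walk length without the two sources of randomness interfering — this is exactly where Wald's identity does the work. The reverse-time bound and the closing parameter-substitution algebra are routine by comparison.
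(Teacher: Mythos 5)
Your proposal is correct and follows essentially the same route as the paper's (much terser) proof: the reverse-time is the single FRONTIER call bounded via Theorem~\ref{thm:front} with additive accuracy $\e_{inv}=\beta\e_r$, and the forward-time is the product $n_f\cdot(1/\alpha)\cdot(1/p_{min})$, since each walk has expected length at most $1/\alpha$ and each continued step costs at most $1/p_{min}$ rejection samples because the walk is stopped whenever $p_{\overline{T}}(u)<p_{min}$. Your added care (Wald's identity for combining the random walk length with the random per-step sampling cost, and the explicit substitutions $\beta=\frac{c}{3+c}$, $p_{min}=\frac{c}{3(1+\beta)}$, $l_{max}=\log_{1-\alpha}(c\delta/3)$) merely makes explicit what the paper leaves implicit.
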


We note that in Theorem \ref{thm:perffastppr_approx}, the reverse-time bound is averaged over random $(s,t)$ pairs, while the forward-time bound is averaged over the randomness in the Theoretical-FAST-PPR algorithm (in particular, for generating the target-avoiding random-walks). As before, we can get worst-case runtime bounds by storing the frontier estimates for all nodes. Also, similar to Corollary \ref{corr:balance}, we can balance forward and reverse times aas follows:

\begin{corollary}
\label{corr:apprbalance}
Let $\e_r=\sqrt{\frac{\delta c^2d\log(1/(1-\alpha))}{\log(1/p_{fail})\log(1/\delta)}}$, where $d=m/n$, and $\e_f=\delta/\e_r$. Then Theoretical-FAST-PPR has an average running-time of $O\left(\frac{1}{\alpha c^2}\sqrt{\frac{d}{\delta}}\sqrt{\frac{\log(1/p_{fail})\log(1/\delta)}{\log(1/(1-\alpha))}}\right)$ per-query, or alternately, worst-case running time and per-node pre-computation and storage of $O\left(\frac{1}{\alpha c^2}\sqrt{\frac{d}{\delta}}\sqrt{\frac{\log(1/p_{fail})\log(1/\delta)}{\log(1/(1-\alpha))}}\right)$. 
\end{corollary}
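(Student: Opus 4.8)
The plan is to treat Corollary \ref{corr:apprbalance} as a direct optimization of the two running-time terms supplied by Theorem \ref{thm:perffastppr_approx}, since the statement is essentially the ``balanced'' instantiation of that theorem, mirroring how Corollary \ref{corr:balance} follows from Theorem \ref{thm:perffastppr}. Recall from Theorem \ref{thm:perffastppr_approx} that the average reverse-time is $O\left(\frac{d}{c\alpha\e_r}\right)$ and the average forward-time is $O\left(\frac{\log(1/p_{fail})\log(1/\delta)}{c^3\alpha\e_f\log(1/(1-\alpha))}\right)$, with $\e_f=\delta/\e_r$. Writing the total cost as a function of the single free parameter $\e_r$, the forward term scales like $\e_r$ (through $1/\e_f=\e_r/\delta$) while the reverse term scales like $1/\e_r$; hence the total has the elementary form $a/\e_r+b\,\e_r$.

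First I would minimize $a/\e_r+b\,\e_r$ over $\e_r>0$. This is minimized at $\e_r=\sqrt{a/b}$, and at that point the two terms are exactly equal. Reading off $a=\frac{d}{c\alpha}$ and $b=\frac{\log(1/p_{fail})\log(1/\delta)}{c^3\alpha\delta\log(1/(1-\alpha))}$ and simplifying $\sqrt{a/b}$ yields precisely the stated choice $\e_r=\sqrt{\frac{\delta c^2 d\log(1/(1-\alpha))}{\log(1/p_{fail})\log(1/\delta)}}$. This is the only substantive computation, and it is pure algebra. Substituting this $\e_r$ back into either term (they coincide) confirms that the common value equals $O\left(\frac{1}{\alpha c^2}\sqrt{\frac{d}{\delta}}\sqrt{\frac{\log(1/p_{fail})\log(1/\delta)}{\log(1/(1-\alpha))}}\right)$, so their sum is of the same order, establishing the average-running-time claim.

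The one point worth flagging — rather than a genuine obstacle — is the meaning of ``average.'' As noted immediately after Theorem \ref{thm:perffastppr_approx}, the reverse bound is averaged over random $(s,t)$ pairs while the forward bound is averaged over the algorithm's internal randomness (in generating the target-avoiding walks). Thus the corollary's ``average running time'' genuinely combines both sources of averaging, and I would make this explicit so that adding the two expectations is justified by linearity.

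Finally, for the worst-case running-time and storage claim I would invoke the stored-oracle argument of Section \ref{ssec:storage}: pre-computing and storing every node's frontier removes the $(s,t)$-averaged reverse work from query time, leaving only the forward walks, whose algorithm-averaged cost is exactly the balanced value above, giving the stated worst-case bound. The per-node storage is $O(d/\e_r)$, following from the total-storage estimate $\mathrm{Total\ Storage}\le m/\e_r$ of Section \ref{ssec:storage} divided by $n$; plugging in the balanced $\e_r$ gives $O\left(\frac1c\sqrt{d/\delta}\sqrt{\frac{\log(1/p_{fail})\log(1/\delta)}{\log(1/(1-\alpha))}}\right)$, which — since $c<1$ and $\alpha\le1$ give $\alpha c<1$ and hence $\frac1c\le\frac{1}{\alpha c^2}$ — lies within the stated bound. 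There is no hard step: the full content of the result resides in Theorem \ref{thm:perffastppr_approx}, and the corollary is the routine act of balancing its two terms.
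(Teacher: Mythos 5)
Your proposal is correct and matches the paper's intended argument: the paper states Corollary \ref{corr:apprbalance} with no separate proof, deriving it (``similar to Corollary \ref{corr:balance}'') precisely by balancing the reverse term $O\bigl(d/(c\alpha\epsilon_r)\bigr)$ against the forward term of Theorem \ref{thm:perffastppr_approx}, which is exactly your $a/\epsilon_r + b\,\epsilon_r$ minimization, and your algebra for the optimal $\epsilon_r$ and the common value checks out. Your added remarks --- on the two distinct sources of averaging and on the stored-oracle argument with per-node storage $O(d/\epsilon_r)$ absorbed via $1/c \leq 1/(\alpha c^2)$ --- are faithful to Section \ref{ssec:storage} and go slightly beyond what the paper spells out, but introduce no deviation in approach.
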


\begin{proof}[of theorem \ref{thm:perffastppr_approx}]
The reverse-time bound is same as for Theorem \ref{thm:perffastppr}. For the forward-time, observe that each target-avoiding random-walk $RW_T(s)$ takes \emph{less than} $1/\alpha=O(1)$ steps on average (since the walk can be stopped before $L$, either when it hits $l_{max}$ steps, or encounters a node $u$ with $p_{\overline{T}}(u)<p_{min}$). Further, for each node on the walk, we need (on average) at most $1/p_{min}$ samples to discover a neighboring node $\notin T$.
\end{proof}

\begin{proof}[of theorem \ref{thm:fastpprapprox}]
Define target set $T=\widehat{T}_t(\e_r)$. Using Lemma \ref{lem:altestimate}, given a target-avoiding random-walk $RW_T(s)=\{s,V_1,V_2,\ldots\}$, we define our estimator for $\PR_s(t)$ as: 
\begin{align*}
\widehat{\PR}_s(t)&=\sum_{i=1}^{L}\left(\prod_{j=0}^{i-1}p_{\overline{T}}(V_j)\right)\widehat{S}_T(V_i).
\end{align*}
Now, from Lemma \ref{lem:approxfrontier}, we have that for any node $u\notin T$, the \emph{approximate score} $\widehat{S}_T(u)=\frac{1}{d^{out}(u)}\sum_{z\in\N_{T}(u)}\widehat{\PR}_z(t)$ lies in $(1\pm c_{inv})S_T(u)$. Thus, for any $s\notin \widehat{T}_t(\e_r)$, we have:
\begin{align*}
\left|\EE_{RW_T}\left[\widehat{\PR}_s(t)\right]-\PR_s(t)\right|\leq c_{inv}\PR_s(t)
\end{align*}	 
with $c_{inv}=\frac{\beta}{1-\beta}=\frac{c}{3}$ (as we chose $\beta=\frac{c}{3+c}$).

Next, suppose we define our estimate as $S=\frac{1}{n_f}\sum_{i=1}^{n_f}S_i$. Then, from the triangle inequality we have:	
\begin{align}
\label{eq:triangle}
\left|S-\PR_s(t)\right|\leq&
\left|S-\EE[S]\right|+\left|\EE\left[S\right]-\EE\left[\widehat{\PR}_s(t)\right]\right|+\\&\left|\EE_{RW_T}\left[\widehat{\PR}_s(t)\right]-\PR_s(t)\right|.
\end{align}	 

We have already shown that the third term is bounded by $c\PR_s(t)/3$. The second error term is caused due to two mutually exclusive events -- stopping the walk due to truncation, or due to the current node having $p_{\overline{T}}(u)$ less than $p_{min}$. To bound the first, we can re-write our estimate as:
\begin{align*}
&\widehat{\PR}_u(t)=\sum_{i=1}^{L}\left(\prod_{j=0}^{i-1}p_{\overline{T}}(V_j)\right)\widehat{S}_T(V_i)\\
&\leq\EE_{RW_T}\left[\sum_{i=1}^{L\wedge l_{max}}\left(\prod_{j=0}^{i-1}p_{\overline{T}}(V_j)\right)S_T(V_i)\right]+(1-\alpha)^{(l_{max}+1)}, 
\end{align*}
where $L\sim Geom(\alpha)$, and $L\wedge l_{max}=\min\{L,l_{max}\}$ for any $l_{max}>0$. Choosing $l_{max}=\log_{1-\alpha}(c\delta/3)$, we incur an additive loss in truncation which is at most $c\delta/3$ -- thus for any pair $(s,t)$ such that $\PR_s(t)>\delta$, the loss is at most $c\PR_s(t)/3$. On the other hand, if we stop the walk at any node $u$ such that $p_{\overline{T}}(u)\leq p_{min}$, then we lose at most a $p_{min}$ fraction of the walk-score. Again, if $\PR_s(t)>\delta$, then we have from before that $\widehat{\PR}_s(t)<\delta(1+\beta)$ -- choosing $p_{min}=\frac{c}{3(1+\beta)}$, we again get an additive loss of at most $c\PR_s(t)/3$.

Finally, to show a concentration for $S$, we need an upper bound on the estimates $S_i$. For this, note first that for any node $u$, we have $\widehat{S}_T(u)\leq\e_r$. Since we truncate all walks at length $l_{max}$, the per-walk estimates $S_i$ lie in $[0,l_{max}\e_r]$. Suppose we define $T_i=\frac{S_i}{l_{max}\e_r}$ and $T=\sum_{i\in[n_f]}T_i$; then we have $T_i\in[0,1]$. Also, from the first two terms in equation \ref{eq:triangle}, we have that $|\EE[S_i]-\PR_s(t)|<2c\PR_s(t)/3$, and thus $\left(1-\frac{2c}{3}\right)\frac{n_f\PR_s(t)}{l_{max}\e_r}\leq\EE[T]\leq\left(1+\frac{2c}{3}\right)\frac{n_f\PR_s(t)}{l_{max}\e_r}$. Now, as in Theorem \ref{thm:fastpprmain}, we can now apply standard Chernoff bounds to get:
\begin{align*}
\PP\left[\left|S-\EE[S]\right|\geq \frac{c\PR_s(t)}{3}\right]
&= \PP\left[\left|T-\EE[T]\right|\geq \frac{cn_f\PR_s(t)}{3l_{max}\e_r}\right]\\
&\leq \PP\left[\left|T-\EE[T]\right|\geq \frac{c\EE[T]}{3(1+2c/3)}\right]\\
&\leq 2\exp{\left(-\frac{\left(\frac{c}{3+2c}\right)^2\EE[T]}{3}\right)}\\
&\leq 2\exp{\left(-\frac{c^2(1-2c/3)n_f\PR_s(t)}{3(3+2c)^2l_{max}\e_r}\right)}\\
&\leq 2\exp{\left(-\frac{c^2(3-2c)n_f\e_f}{9(3+2c)^2l_{max}}\right)}.
\end{align*}
Since $c\in[0,1]$, setting $n_f=\frac{45}{c^2}\cdot\frac{l_{max}\log\left(2/p_{fail}\right)}{\e_f}$ gives the desired failure probability.
\end{proof}

\section{Details of Lower Bound}
\label{appsec:lower}

\begin{proof}[of Theorem~\ref{thm:lb}]
We perform a direct reduction. Set $N = \lfloor 1/10\delta \rfloor$, and consider the distributions $\cG_1$ and $\cG_2$.
We will construct a distinguisher using a single query to an algorithm for \signif$(\delta)$.
Hence, if there is an algorithm for \signif$(\delta)$ taking less than $1/100\sqrt{\delta}$ queries, then Theorem~\ref{thm:GR}
will be violated. 
We construct a distinguisher as follows. Given graph $G$, it picks two uniform random nodes $s$ and $t$.
We run the algorithm for \signif$(\delta)$. If it accepts (so it declares $\PR_s(t) > \delta$),
then the distinguisher outputs $\cG_1$ as the distribution that $G$ came from. Otherwise, it 
outputs $\cG_2$.

We prove that the distinguisher is correct with probability $> 2/3$. 
First, some preliminary lemmas.

\begin{lemma} \label{lem:g1} With probability $> 3/4$ over the choice of $G$ from $\cG_1$, for any nodes $s,t$, $\PR_s(t) > \delta$.
\end{lemma}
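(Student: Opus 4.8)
The plan is to exploit the fact that a graph drawn from $\cG_1$ is, with high probability, a good spectral expander, so that a random walk mixes to the uniform distribution well before the geometric stopping time $L\sim Geom(\alpha)$ terminates it. Since the stationary distribution of the walk on an $N$-node regular graph is uniform, this forces $\PR_s(t)$ to be close to $1/N$ for \emph{every} pair $(s,t)$; and the choice $N=\lfloor 1/10\delta\rfloor$ makes $1/N\geq 10\delta$, comfortably above the threshold.

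First I would invoke a spectral bound for random $3$-regular graphs. The union of three uniform random perfect matchings is a random $3$-regular multigraph, so by known results on the spectra of random regular graphs (e.g.\ Friedman's theorem) with probability $>3/4$ the walk matrix $W=\tfrac13 A$ has all nontrivial eigenvalues bounded in magnitude by a constant $\lambda_0<1$ (concretely $\lambda_0$ slightly above $2\sqrt2/3$). Condition on this event. Since $W$ is symmetric with top eigenvector $\bone/\sqrt{N}$, a spectral decomposition gives, \emph{uniformly over all pairs $s,t$ and all lengths $\ell$},
\[
\left|(W^\ell)_{s,t}-\tfrac1N\right|\leq \lambda_0^\ell .
\]
This uniformity in $s,t$ is exactly what allows the final conclusion to hold for all pairs simultaneously.

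Next I would write $\PR_s(t)=\sum_{\ell\geq0}\alpha(1-\alpha)^\ell (W^\ell)_{s,t}$ and restrict the sum to $\ell\geq\ell_0$, where $\ell_0$ is chosen so that $\lambda_0^{\ell_0}\leq 1/(2N)$, i.e.\ $\ell_0=O\!\left(\log N/\log(1/\lambda_0)\right)$. On this tail each transition probability is at least $1/(2N)$, so
\[
\PR_s(t)\geq \frac{1}{2N}\sum_{\ell\geq \ell_0}\alpha(1-\alpha)^\ell=\frac{(1-\alpha)^{\ell_0}}{2N}.
\]
Finally I would verify the constants: with $\alpha=1/(100\log(1/\delta))$ and $\ell_0=O(\log N/\log(1/\lambda_0))=O(\log(1/\delta))$, the product $\alpha\ell_0$ is a small absolute constant (the factor $100$ in $\alpha$ is deliberately generous), so $(1-\alpha)^{\ell_0}$ is bounded below by a constant close to $1$. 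Combined with $1/N\geq 10\delta$, this yields $\PR_s(t)>\delta$ with room to spare.

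The main obstacle is the first step: establishing that $\cG_1$ yields an expander with the required probability together with a clean eigenvalue bound $\lambda_0<1$ that is valid for finite $N$. One must either cite a nonasymptotic expansion result for random $3$-regular graphs (or for unions of random matchings), or absorb the $o(1)$ error term by only claiming some fixed $\lambda_0$ bounded away from $1$ — which suffices, since every subsequent estimate needs only that $\lambda_0$ is a constant less than $1$. A secondary bookkeeping issue is handling parity (a perfect matching requires $N$ even) and possible parallel edges from coincident matchings; both are harmless, as the walk matrix remains symmetric with uniform stationary distribution.
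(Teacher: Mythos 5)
Your proposal is correct and follows essentially the same route as the paper: both condition on the event (probability $>3/4$) that the union of three random matchings is an expander, use $O(\log N)$-step mixing to lower-bound $(W^\ell \be_s)(t)$ by $1/(2N)$ uniformly in $s,t$, and then lower-bound $\PR_s(t)$ by the tail $\sum_{\ell \geq \ell_0}\alpha(1-\alpha)^\ell \cdot (2N)^{-1}$, checking that $\alpha = 1/(100\log(1/\delta))$ makes $(1-\alpha)^{\ell_0}$ a constant and $1/N \geq 10\delta$ closes the gap. The only cosmetic difference is that you derive the mixing bound from an explicit spectral-gap citation (Friedman) with an entrywise Cauchy--Schwarz estimate, whereas the paper cites a textbook expander result with an $\ell_2$ convergence bound $\|W^\ell\be_s - \bone/N\|_2 \leq 1/N^2$; both yield the same conclusion.
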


\begin{proof} A graph formed by picking $3$ uniform random matchings is an \emph{expander} with probability
$1 - \exp(-\Omega(N)) > 3/4$ (Theorem 5.6 of~\cite{motwani2010randomized}). Suppose $G$ was an expander. 
Then a random walk of length $10\log N$ from $s$ in $G$ is guaranteed
to converge to the uniform distribution. Formally, let $W$ be the random walk matrix of $G$.
For any $\ell \geq \flo{10\log N}$, $\|W^\ell \be_s - \bone/N\|_2 \leq 1/N^2$, where $\bone$ is the all ones vector~\cite{motwani2010randomized}.
So for any such $\ell$, $(W^\ell \be_s)(t) \geq 1/2N$.
By standard expansions, $\PR_s = \sum_{\ell=0}^\infty \alpha (1-\alpha)^\ell W^\ell \be_s$.
We can bound 

\begin{eqnarray*}
\PR_s(t) & \geq & \sum_{\ell \geq \flo{10\log N}} \alpha (1-\alpha)^\ell  (W^\ell \be_s)(t) \\
& \geq & (2N)^{-1}\sum_{\ell \geq \flo{10\log N}} \alpha (1-\alpha)^\ell \\
& = & (1-\alpha)^{\flo{10\log N}} (2N)^{-1}
\end{eqnarray*}

Setting $\alpha = 1/100\log(1/\delta)$ and $N = \flo{1/10\delta}$,
we get $\PR_s(t) > 1/6N \geq \delta$. All in all, when $G$
is an expander, $\PR_s(t) > \delta$.
\end{proof}

\begin{lemma} \label{lem:g2} Fix any $G$ from $\cG_2$. 
For two uniform random nodes $s,t$ in $G$, $\PR_s(t) = 0$ with probability at least $3/4$.
\end{lemma}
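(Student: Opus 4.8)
The plan is to exploit the defining structural feature of $\cG_2$: because all three matchings are chosen \emph{within} each part $V_i$, the generated graph $G$ contains no edge crossing between distinct parts. Hence $G$ is literally a disjoint union of four graphs on the sets $V_1,V_2,V_3,V_4$, and these parts are unions of connected components of $G$. This is the only property of $\cG_2$ I will use — in particular I will not need any expansion or even connectivity statement about the individual parts.

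Next I would observe that the personalized PageRank random walk from a source $s$ can never leave the part $V_{i(s)}$ that contains $s$. Indeed, at every step the walk either teleports back to $s$ (which stays in $V_{i(s)}$) or follows an existing edge of $G$, and by the previous paragraph no edge leaves $V_{i(s)}$. Thus the walk $RW(s,L)$ is confined to $V_{i(s)}$, and by the characterization $\PR_s(t)=\PP[V_L=t]$ from Eqn.~\eqref{eq:pprmc}, it follows that $\PR_s(t)=0$ whenever $t$ lies in a part different from that of $s$. In other words, the event that $s$ and $t$ fall in distinct parts is contained in the event $\{\PR_s(t)=0\}$.

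Finally I would lower-bound $\PP[\PR_s(t)=0]$ by the probability that $s$ and $t$ land in different parts. Since $s$ and $t$ are drawn independently and uniformly over the $N$ nodes and the four parts are equal-sized (each of size $N/4$), the probability that they share a part is $\sum_{i=1}^{4}(1/4)^2 = 1/4$, so $\PP[s,t\text{ in different parts}]=3/4$. Combining this with the containment from the previous paragraph yields $\PP[\PR_s(t)=0]\geq 3/4$, as claimed.

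There is essentially no hard step here; the argument is purely structural. The only two points that warrant care are that we use only the one-sided implication ``different parts $\Rightarrow \PR_s(t)=0$'' (so we never need to argue that nodes \emph{within} a part are mutually reachable, which is exactly the delicate point in the companion Lemma~\ref{lem:g1}), and that the equal-partition hypothesis is precisely what makes the collision probability equal to $1/4$ and hence gives the bound with the stated constant.
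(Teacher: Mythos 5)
Your proof is correct and follows essentially the same route as the paper: the paper likewise observes that a graph from $\cG_2$ splits into the four parts of size $N/4$, so $s$ and $t$ lie in different pieces with probability $3/4$, in which case $\PR_s(t)=0$. Your version is in fact slightly more careful than the paper's one-line argument, since you only use the one-sided implication (parts are \emph{unions} of connected components, so different parts imply $\PR_s(t)=0$) rather than asserting, as the paper loosely does, that each part is itself a single connected component.
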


\begin{proof}
Any graph in $G$ has $4$ connected components, each of size $N/4$.
The probability that $s$ and $t$ lie in different components is $3/4$, in which case $\PR_s(t) = 0$.
\end{proof}

If the adversary chose $\cG_1$, then $\PR_s(t) > \delta$. If he chose $\cG_2$,
$\PR_s(t) = 0$ -- by the above lemmas, each occurs with probability $>3/4$ . In either case, the probability that \signif$(\delta)$ errs is at most $1/10$. By the union bound, the distinguisher is correct overall with probability at least $2/3$.
\end{proof}

\section{BALANCED-FAST-PPR Pseudocode}
\label{appsec:balanced_code}

\begin{algorithm}[!t]
\caption{BALANCED-FAST-PPR$(s,t,\delta)$}
\label{alg:BALANCED_FASTPPR}
\begin{algorithmic}[1] 
\REQUIRE Graph $G$, teleport probability $\alpha$, start node $s$, target node $t$, threshold $\delta$
\STATE Set accuracy parameters $c$, $\beta$  (These trade-off speed and accuracy -- in experiments we use $c = 350$, $\beta=1/6$.)
\STATE Call BALANCED-FRONTIER$(G, \alpha, t, c, \beta)$ to obtain target set $T_t(\e_r)$, frontier set $ F_t(\e_r)$, inverse-PPR values $(\PR^{-1}_t(w))_{w\in F_t(\e_r) \cup T_t(\e_r)}$, and reverse threshold $\e_r$
\STATE (The rest of BALANCED-FAST-PPR is identical to FAST-PPR (Algorithm \ref{alg:FASTPPR}) from line 3)
\end{algorithmic}
\end{algorithm}

\begin{algorithm}[!t]
\caption{BALANCED-FRONTIER$(t,c,\beta)$}
\label{alg:balancedInvPPR}
\begin{algorithmic}[1]
\REQUIRE Graph $G=(V,E)$, teleport probability $\alpha$, target node $t$, accuracy factors $c,\beta$.
\STATE Set constant $T_{walk}$ to be the (empirical) average time it takes to generate a walk.
\STATE Initialize (sparse) estimate-vector $\widehat{\PR}^{-1}_t$ and (sparse) residual-vector $r_t$ as: $\begin{cases} \widehat{\PR}^{-1}_t(u)=r_t(u)=0 \text{ if } u \neq t \\ \widehat{\PR}^{-1}_t(t)=r_t(t)=\alpha \end{cases}$
\STATE Initialize target-set $\widehat{T}_t=\{t\}$, frontier-set $\widehat{F}_t=\{\}$ 
\STATE Initialize $\epsilon_r = 1/\beta$.
\STATE Start a timer to track time spent.
\STATE Define function FORWARD-TIME($\epsilon_r$) = $T_{walk} \cdot k$ where $k=c \cdot \epsilon_r / \delta$ is the number of walks needed.

\WHILE{time spent < FORWARD-TIME($\epsilon_r$)} 

\STATE $w = \argmax_u r_t(u)$
\FOR{$u\in\mathcal{N}^{in}(w)$}
\STATE      $\Delta =  (1-\alpha).\frac{r_t(w)}{d^{out}(u)}$
\STATE      $\widehat{\PR}^{-1}_t(u) = \widehat{\PR}^{-1}_t(u) + \Delta, r_t(u) = r_t(u) + \Delta$
\IF{$\widehat{\PR}^{-1}_t(u)>\e_r$}
\STATE $\widehat{T}_t=\widehat{T}_t \cup \{u\}\,,\, \widehat{F}_t=\widehat{F}_t \cup \mathcal{N}^{in}(u)$
\ENDIF
\ENDFOR
\STATE Update $r_t(w)=0, \epsilon_r = (\max_u r_t(u)) / (\alpha \cdot \beta)$
\ENDWHILE
\STATE $\widehat{F}_t=\widehat{F}_t\setminus \widehat{T}_t$
\RETURN $\widehat{T}_t,\widehat{F}_t, (\widehat{\PR}^{-1}_t(w))_{w\in \widehat{F}_t \cup \widehat{T}_t}, \epsilon_r$
\end{algorithmic}
\end{algorithm}    

We introduced BALANCED-FAST-PPR in Section \ref{ssec:rebalancing}, where for brevity, we only gave a brief description. We provide the complete pseudocode for the Balanced FAST-PPR algorithm in Algorithms \ref{alg:BALANCED_FASTPPR} and \ref{alg:balancedInvPPR}.
 
We note that similar to Algorithm \ref{alg:invPPR}, the update rule for inverse-PPR estimates in Algorithm \ref{alg:balancedInvPPR} is based on local-update algorithms from \cite{Andersen2007,Lofgren2013} -- the novelty here is in using runtime estimates for random-walks to determine the amount of backward work (i.e., up to what accuracy the local-update step should be executed). Also note that in Algorithm \ref{alg:balancedInvPPR}, we need to track the maximum residual estimate -- this can be done efficiently using a binary heap. For more details, please refer our source code, which is available at:
\url{http://cs.stanford.edu/~plofgren/fast_ppr/}

\section{Runtime vs. less-Accurate Monte-Carlo}
\label{appsec:expts}

\begin{figure}
\centering
\subfigure[Sampling targets uniformly]{
\label{fig:runtime_0_2_a_mc1}
\includegraphics[width=1\columnwidth]{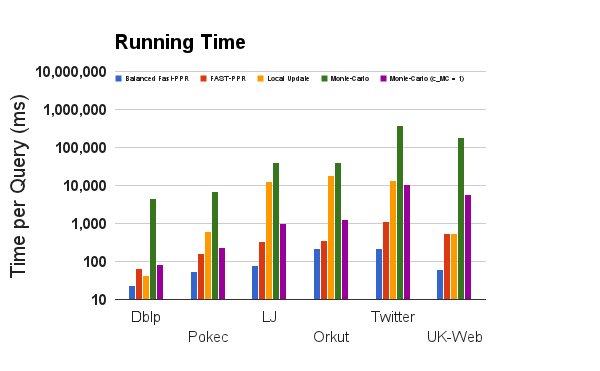}
}

\subfigure[Sampling targets from PageRank distribution]{
\label{fig:runtime_0_2_b_mc1}
\includegraphics[width=1\columnwidth]{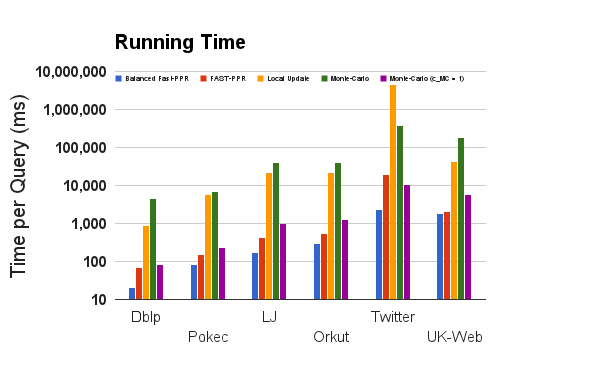}
}
\caption[Running-time Plots]{Average running-time (on log-scale) for different networks, with $\delta=\frac{4}{n}, \alpha=0.2$, for $1000$ $(s,t)$ pairs. We compare to a less-accurate version of Monte-Carlo, with only $\frac{1}{\epsilon_f}$ walks -- notice that Balanced FAST-PPR is still faster than Monte-Carlo.}
\label{fig:runtime_0_2_mc1}
\end{figure}

In Figure \ref{fig:runtime_0_2_mc1} we show the data data as in Figure \ref{fig:runtime_0_2} with the addition of a less accurate version of Monte-Carlo. When using Monte-Carlo to detect an event with probability $\epsilon_f$, the minimum number of walks needed to see the event once on average is $\frac{1}{\epsilon_f}$. The average relative error of this method is significantly greater than the average relative error of FAST-PPR, but notice that even with this minimal number of walks, Balanced FAST-PPR is faster than Monte-Carlo.

\end{document}